\newcommand*{\email}[1]{\href{mailto:#1}{\nolinkurl{#1}}}
\newcommand{\Abs}[1]{\ensuremath{\left| #1 \right|}}
\newcommand{\Simplex}{\ensuremath{\Delta}}
\newcommand{\CornerSimplex}{\ensuremath{\Delta_\mathrm{c}}}
\newcommand{\eps}{\ensuremath{\varepsilon}}
\newcommand{\RR}{\ensuremath{\mathbb{R}}}
\newcommand{\zo}{\ensuremath{\{0,1\}}}
\newcommand{\PTIME}{\ensuremath{\mathrm{P}}}
\newcommand{\NP}{\ensuremath{\mathrm{NP}}}
\newcommand{\coNP}{\ensuremath{\mathrm{coNP}}}
\newcommand{\coDP}{\ensuremath{\mathrm{coDP}}}
\newcommand{\PSPACE}{\ensuremath{\mathrm{PSPACE}}}
\newcommand{\cETR}{\ensuremath{\exists\RR}}
\newcommand{\coETR}{\ensuremath{\forall\RR}}
\newcommand{\PH}{\ensuremath{\mathrm{PH}}}
\newcommand{\DNP}{\ensuremath{\mathrm{DNP}}}
\newcommand{\coDNP}{\ensuremath{\mathrm{coDNP}}}
\newcommand{\DSigma}{\ensuremath{\mathrm{D\Sigma}}}
\newcommand{\DPi}{\ensuremath{\mathrm{D\Pi}}}
\newcommand{\Sigmap}{\ensuremath{\Sigma^\mathrm{p}}}
\newcommand{\Pip}{\ensuremath{\Pi^\mathrm{p}}}
\newcommand{\cForallExistsR}{\ensuremath{\forall\exists\RR}}
\newcommand{\cExistsForallR}{\ensuremath{\exists\forall\RR}}
\newcommand{\existsR}{\ensuremath{\exists^\RR}}
\newcommand{\existsD}{\ensuremath{\exists^\mathrm{D}}}
\newcommand{\forallR}{\ensuremath{\forall^\RR}}
\newcommand{\forallD}{\ensuremath{\forall^\mathrm{D}}}
\newcommand{\existsROP}{\ensuremath{\existsR\cdot}}
\newcommand{\existsDOP}{\ensuremath{\existsD\cdot}}
\newcommand{\forallROP}{\ensuremath{\forallR\cdot}}
\newcommand{\forallDOP}{\ensuremath{\forallD\cdot}}
\newcommand{\pairing}[2]{\ensuremath{\langle #1 , #2 \rangle}}
\newcommand{\poly}{\ensuremath{\mathrm{poly}}}
\newcommand{\theory}{\ensuremath{\mathrm{Th}}}
\newcommand{\Etheory}{\ensuremath{\theory_\exists}}
\newcommand{\QUAD}{\textsc{Quad}}
\newcommand{\FOURFEAS}{\textsc{4Feas}}
\newcommand{\HOMFOURFEAS}{\textsc{Hom4Feas}}
\newcommand{\PosSLP}{\textrm{PosSLP}}
\newcommand{\BP}{\ensuremath{\operatorname{BP}}}
\newcommand{\ExistsESS}{\ensuremath{\exists\mathrm{ESS}}}
\newcommand{\IsESS}{\ensuremath{\mathrm{IsESS}}}
\newcommand{\ExistsLSS}{\ensuremath{\exists\mathrm{LSS}}}
\newcommand{\IsLSS}{\ensuremath{\mathrm{IsLSS}}}
\newcommand{\calG}{\ensuremath{\mathcal{G}}}
\newcommand{\calC}{\ensuremath{\mathcal{C}}}
\newcommand{\support}{\ensuremath{\operatorname{Supp}}}
\newcommand{\Sym}{\ensuremath{\operatorname{Sym}}}
\newcommand{\Exp}{\operatorname*{E}}
\newtheorem{theorem}{Theorem}
\newtheorem{proposition}{Proposition}
\newtheorem{lemma}{Lemma}
\newtheorem{corollary}{Corollary}
\newtheorem{definition}{Definition}
\begin{document}
\title{Computational Complexity of Multi-Player Evolutionarily Stable Strategies\thanks{This paper appeared previously in a preliminary form~\cite{CSR:BlancH2021}}}
\author[1]{Manon Blanc}
\author[2]{Kristoffer Arnsfelt Hansen\thanks{Supported by the Independent Research Fund Denmark under grant no. 9040-00433B.}}
\affil[1]{ENS Paris-Saclay\\\email{manonblanc@free.fr}}
\affil[2]{Aarhus University\\\email{arnsfelt@cs.au.dk}}

\maketitle

\begin{abstract}
  In this paper we study the computational complexity of computing an
  evolutionary stable strategy (ESS) in multi-player symmetric
  games. For two-player games, deciding existence of an ESS is
  complete for $\Sigmap_2$, the second level of the polynomial time
  hierarchy.  We show that deciding existence of an ESS of a
  multi-player game is closely connected to the second level of the
  \emph{real} polynomial time hierarchy. Namely, we show that the
  problem is hard for a complexity class we denote as
  $\existsDOP \coETR$ and is a member of $\cExistsForallR$, where the
  former class restrict the latter by having the existentially
  quantified variables be Boolean rather then real-valued. As a
  special case of our results it follows that deciding whether a given
  strategy is an ESS is complete for $\coETR$.

  A concept strongly related to ESS is that of a locally superior
  strategy (LSS). We extend our results about ESS and show that
  deciding existence of an LSS of a multiplayer game is likewise hard
  for $\existsDOP \coETR$ and a member of $\cExistsForallR$, and as a
  special case that deciding whether a given strategy is an LSS is
  complete for $\coETR$.
\end{abstract}

\section{Introduction}
First introduced by Maynard Smith and Price in ~\cite{Nature:MaynardSmithP1973,JTB:MaynardSmith1974}, a central concept emerging
from evolutionary game theory is that of an evolutionary stable strategy (ESS)
in a symmetric  two-player game in strategic form.
Each pure strategy of the game is viewed as a type of possible individuals of a
population. A mixed strategy of the game then corresponds to
describing the proportion of each type of individual of the
population, which as a simplifying assumption is considered to be
infinite. The population is engaged in a pairwise conflict where two
individuals are selected at random and receive payoffs depending on
their respective types. The population is expected to evolve in a way
where strategies that achieve a higher payoff than others will spread
in the population. A strategy $\sigma$ is an ESS if it outperforms any
``mutant'' strategy $\tau\neq\sigma$ adopted by a small fraction of
the population. Otherwise we say that $\sigma$ may be invaded. An ESS
is in particular a symmetric Nash equilibrium (SNE), but, unlike a
SNE, it is not guaranteed to exist.

\begin{figure}[h]
  \centering
  \begin{tabular}{c|c|c|}
    \multicolumn{1}{c}{} &  \multicolumn{1}{c}{Hawk} & \multicolumn{1}{c}{Dove} \\\cline{2-3}
    Hawk & -1,-1 & 2,0  \\\cline{2-3}
    Dove & 0,2 & 1,1 \\\cline{2-3}
  \end{tabular}
  \caption{Hawk-Dove game}
  \label{FIG:HawkDove}
\end{figure}
The Hawk-Dove game~\cite{Nature:MaynardSmithP1973}, presented with
concrete payoffs in Fig.~\ref{FIG:HawkDove}, is a classic example
where an ESS may explain the proportion of the population tending to
engage in aggressive behavior.  The game has a unique SNE $\sigma$,
where the players choose Hawk with probability~$\frac{1}{2}$, and this
is in fact an ESS. Note first that
$u(\sigma,\sigma)= (-1)\left(\frac{1}{2} \right)^2 + 2\left(
  \frac{1}{2} \right)^2 + 0\left( \frac{1}{2} \right)^2 + 1\left(
  \frac{1}{2} \right)^2 = \frac{1}{2}$. Consider now any strategy
profile $\tau$ that chooses Hawk with probability~$p$. Then
$u(\tau,\sigma) = (-1 + 2) p/2 +(1+0)(1-p)/2 = \frac{1}{2}$ as
well. However, $u(\sigma,\tau)=\frac{3}{2}-2p$ and
$u(\tau,\tau)=1-2p^2$, and thus
$u(\sigma,\tau) - u(\tau,\tau) = 2(p-\frac{1}{2})^2$, which means that
$\sigma$ outperforms $\tau$ if $p \neq \frac{1}{2}$.

While the two-player setting is the typical setting to study ESS, the
concept may in a natural way be generalized to the setting of
multi-player games, as established by Palm~\cite{JMB:Palm1984} and
Broom, Cannings, and Vickers~\cite{BMB:BroomCV1997}. This allows one
to model populations that engage in conflicts involving more than two
individuals. Many of the two-player games typically studied in the
context of ESS readily generalize to multi-player games, including the
Hawk-Dove and Stag Hunt games (cf.\ \cite{Book:BroomRychtar2013}). For a naturally occurring example,
Broom and Rychtář~\cite[Example~9.1]{Book:BroomRychtar2013} argue that
the cooperative hunting method of carousel feeding by killer whales
may be modeled as a multi-player Stag Hunt game.

The computational complexity of computing an ESS was first studied by
Etessami and Lochbihler~\cite{IJGT:EtessamiL2007}. We shall denote the
problem of deciding whether a given symmetric game in strategic form
has an ESS as $\ExistsESS$ and similarly the problem of deciding
whether a given strategy is an ESS of the given game as
$\IsESS$. Previous work has been concerned only with two-player
symmetric games in strategic form. Etessami and Lochbihler proved that
$\ExistsESS$ is hard both for $\NP$ and $\coNP$ and is contained in
$\Sigmap_2$. Nisan~\cite{ECCC:Nisan2006} showed that $\ExistsESS$ is
hard for the class $\coDP$, which is the class of unions of languages
from $\NP$ and $\coNP$. From both works it also follows that the
problem $\IsESS$ is $\coNP$-complete. Finally
Conitzer~\cite{MOR:Conitzer2019} showed $\Sigmap_2$-completeness for
$\ExistsESS$. The direct but important consequence of these results is
that any algorithm for computing an ESS in a general game can be used
to solve $\Sigmap_2$-complete problems. For instance, we cannot expect
to be able to compute an ESS in a simple way using a SAT solver.

One may observe that the above hardness results for two-player games
also generalize to apply to $m$-player games, for any fixed $m\geq
3$. Note that, since a reduction showing $\Sigmap_2$-hardness must
produce an $m$-player symmetric game, this is not a trivial
observation (in particular adding ``dummy'' players, each having a
single strategy, to a nontrivial symmetric game would result in a
non-symmetric game). One would however suspect that the problems
$\ExistsESS$ and $\IsESS$ become significantly harder for $m$-player
games, when $m \geq 3$. Namely, starting with the work of Schaefer and
Štefankovič~\cite{TOCS:SchaeferS2017}, several works have shown that
many natural decision problems concerning Nash equilibrium (NE) in
3-player strategic form games are
$\cETR$-complete~\cite{TEAC:GargMVY2018,STACS:BiloM2016,STACS:BiloM17,TCS:Hansen2019,SAGT:BerthelsenH19}. These
results stand in contrast to the two-player setting, where the same
decision problems are
$\NP$-complete~\cite{GEB:GilboaZemel89,GEB:ConitzerS08}. The class
$\cETR$ is the complexity class that captures the decision problem for
the existential theory of the reals~\cite{TOCS:SchaeferS2017}, or
alternatively, is the constant-free Boolean part of the real analogue
$\NP_\RR$ in the Blum-Shub-Smale model of
computation~\cite{FCM:BurgisserC2009}. Clearly we have
$\NP \subseteq \cETR$, and from the decision procedure for the
existential theory of the reals by Canny~\cite{STOC:Canny1988} it
follows that $\cETR \subseteq \PSPACE$. We consider it likely that
$\NP$ is a strict subset of $\cETR$, which would mean that the above
mentioned decision problems concerning NE become strictly harder as
the number of players increase beyond two.

We confirm that the problems $\ExistsESS$ and $\IsESS$ indeed are
likely to become harder for multi-player games by proving hardness of
the problems for discrete complexity classes defined in terms of real
complexity classes that we consider likely to be stronger than
$\Sigmap_2$ and $\NP$.  Our results are perhaps most easily stated in
terms of the decision problem for the first order theory of the reals
$\theory(\RR)$. Just like the class $\cETR$ corresponds to the existential
fragment $\theory_\exists(\RR)$ of $\theory(\RR)$, we can consider
classes $\coETR$ and $\cExistsForallR$ corresponding to the universal
fragment $\theory_\forall(\RR)$ and the existential-universal fragment
$\theory_{\exists\forall}(\RR)$ of $\theory(\RR)$, respectively. It is
easy to see that the problem $\ExistsESS$ belongs to $\cExistsForallR$
and that $\IsESS$ belongs to $\coETR$. We show that for 5-player
games, the problem $\ExistsESS$ is hard for the subclass of
$\cExistsForallR$ where the block of universal quantifiers is
restricted to range over Boolean variables. For the problem $\IsESS$
we completely characterize its complexity for 5-player games by
proving that the problem is also hard for $\coETR$. Our hardness
results thus imply that any algorithm for computing an ESS in a
5-player game can be used to solve quite general problems involving
real polynomials. In particular it indicates that computing an ESS is
significantly more difficult than deciding if a system of real
polynomials has no solution, which is a basic problem complete for
$\coETR$.

Our proof of hardness for $\ExistsESS$ combines ideas of the
$\Pip_2$-completeness proof of the problem \textsc{MinmaxClique} by Ko
and Lin~\cite{Chapter:KoLin1995}, the reduction from the complement of
\textsc{MinmaxClique} to $\ExistsESS$ for two-player games by
Conitzer~\cite{MOR:Conitzer2019}, and the direct translation of
solutions of a polynomial system to strategies of a game by
Hansen~\cite{TCS:Hansen2019}, in addition to new ideas.

A strongly related concept to an ESS is that of a locally superior
strategy (LSS) which is equivalent to an ESS having a uniform invasion
barrier~\cite{JMB:Palm1984}. For the case of two-player games these
concepts coincide~\cite{JTB:HofbauerSS1979}, but they differ for
multi-player games~\cite{RePEc:Milchtaich2008}. Analogously to the
case of ESS we consider the two computational problems $\ExistsLSS$
and $\IsLSS$ and prove the same results for these as for $\ExistsESS$
and $\IsESS$.

We leave the problem of determining the precise computational
complexity of $\ExistsESS$ and $\ExistsLSS$ as an interesting open
problem. The class $\cExistsForallR$ is the natural real complexity
class generalization of $\Sigmap_2$. Together with
$\Sigmap_2$-completeness of $\ExistsESS$ for the setting of two-player
games, this might lead one to expect that $\ExistsESS$ should be
$\cExistsForallR$-hard for multi-player games. However, a basic
property of the set of evolutionary stable strategies is that any ESS
is an isolated point in the space of
strategies~\cite[Proposition~3]{IGTR:AccinelliMO2019}, which means
that the set of evolutionary stable strategies is always a
\emph{discrete} set. Expressing $\ExistsESS$ in
$\theory_{\exists\forall}(\RR)$, the universal quantifier range over
all potential ESS and the existential quantifier over potential
invading strategies. The fact that the set of ESS is a discrete set
could possibly mean that the universal quantifier could be made
discrete as well. We also note that we do not even know whether
$\ExistsESS$ is hard for $\cETR$, which is clearly a prerequisite for
$\cExistsForallR$-hardness.

\section{Preliminaries}
\label{SEC:Preliminaries}

\subsection{Strategic Form Games}
We present here basic definitions concerning strategic form games,
mainly to establish our notation. A finite $m$-player strategic form
game $\calG$ is given by finite sets $S_1,\dots,S_m$ of actions
(\emph{pure strategies}) together with \emph{utility functions}
$u_1,\dots,u_m : S_1 \times \dots \times S_m \rightarrow \RR$. A
choice of an action $a_i\in S_i$ for each player together form a pure
strategy profile $a=(a_1,\dots,a_m)$. Let $\Delta(S_i)$ denote the set
of probability distributions on $S_i$.  A \emph{(mixed) strategy} for
player~$i$ is then an element $x_i \in \Delta(S_i)$. We may
conveniently identify an action $a_i$ with the strategy that assigns
probability~1 to~$a_i$. A strategy $x_i$ for each player $i$ together
form a strategy profile $x=(x_1,\dots,x_m)$. For fixed $i$ we denote
by $x_{-i}$ the \emph{partial} strategy profile
$(x_1,\dots,x_{i-1},x_{i+1},\dots,x_m)$ for all players except
player~$i$, and if $x'_i \in \Delta(S_i)$ we denote by $(x'_i;x_{-i})$
the strategy profile $(x_1,\dots,x_{i-1},x'_i,x_{i+1},\dots,x_m)$.
The utility functions extend to strategy profiles by letting
$u_i(x)=\Exp_{a \sim x}u_i(a_1,\dots,a_m)$. We shall also refer to
$u_i(x)$ as the \emph{payoff} of player~$i$. A strategy profile $x$ is
a Nash equilibrium (NE) if $u_i(x) \geq u_i(x'_i;x_{-i})$ for all~$i$
and all $x'_i \in \Delta(S_i)$. Every finite strategic form game
$\calG$ has an NE~\cite{AM:Nash51}.

In this paper we shall only consider \emph{symmetric} games. The game
$\calG$ is symmetric if all players have the same set $S$ of actions
and where the utility function of a given player depends only on the
action of that player (and not the identity of the player) together
with the \emph{multiset} of actions of the other players. More
precisely we say that $\calG$ is \emph{symmetric} if there is a finite
set $S$ such that $S_i=S$, for every $i \in [m]$, and such that for
every permutation $\pi$ on $[m]$, every $i \in [m]$ and every
$(a_1,\dots,a_m) \in S^m$ it holds that
$u_i(a_1,\dots,a_m)=u_{\pi^{-1}(i)}(a_{\pi(1)},\dots,a_{\pi(m)})$. It
follows that a symmetric game $\calG$ is fully specified by $S$ and
$u_1$; for simplicity we let $u=u_1$. A strategy profile
$x=(x_1,\dots,x_m)$ is symmetric if $x_1=\dots=x_m$. If a symmetric
strategy profile $x$ is an NE it is called a symmetric NE (SNE). Every
finite strategic form symmetric game $\calG$ has a
SNE~\cite{AM:Nash51}.

A single strategy $\sigma \in \Delta(S)$ defines the symmetric
strategy profile $\sigma^m$. More generally, given
$\sigma, \sigma_1,\dots,\sigma_r\in \Delta(S)$ and
$m_1,\dots,m_r\geq 1$ with $m_1+\dots+m_r=m-1$, we denote by
$(\sigma; \sigma_1^{m_1},\dots,\sigma_r^{m_r})$ a strategy profile
where player~1 is playing using strategy $\sigma$ and $m_i$ of the
remaining players are playing using strategy $\sigma_i$, for
$i=1,\dots,r$. By the assumptions of symmetry, the payoff
$u(\sigma;\sigma_1^{m_1},\dots,\sigma_{r}^{m_r})$ is well defined.

\subsection{Evolutionary Stable Strategies}
Our main object of study is the notion of evolutionary stable
strategies as defined by Maynard Smith and
Price~\cite{Nature:MaynardSmithP1973} for 2-player games and
generalized to multi-player games by Palm~\cite{JMB:Palm1984} and
Broom, Cannings, and Vickers~\cite{BMB:BroomCV1997}. We follow below
the definition given by Broom~et~al.
\begin{definition}
  \label{DEF:ESS}
Let $\calG$ be a symmetric game given by $S$ and $u$. Let
$\sigma,\tau \in \Delta(S)$. We say that $\sigma$ is evolutionary
stable (ES) against $\tau$ if there is $\eps_{\tau}>0$ such that for all
$0<\eps < \eps_{\tau}$ we have
\begin{equation}
  u(\sigma ; \tau_\eps^{m-1}) > u(\tau; \tau_\eps^{m-1}) \enspace ,
  \label{EQ:ESdef}
\end{equation}
where $\tau_\eps = \eps \tau + (1-\eps)\sigma$ is the strategy that
plays according to $\tau$ with probability $\eps$ and according to
$\sigma$ with probability $1-\eps$. We say that $\sigma$ is an
evolutionary stable strategy (ESS) if $\sigma$ is ES against every
$\tau \neq \sigma$. If $\sigma$ is not ES against $\tau$ we also say
that $\tau$ \emph{invades} $\sigma$.
\end{definition}

The supremum over $\eps_{\tau}$ for which Equation~(\ref{EQ:ESdef})
holds is called the \emph{invasion barrier} for $\tau$.  If $\sigma$
is an ESS and there exists $\eps_\sigma>0$ such that for all
$\tau\neq\sigma$ the invasion barrier $\eps_\tau$ for $\tau$ satisfies
$\eps_\tau \geq \eps_\sigma$, we say that $\sigma$ is an ESS with
\emph{uniform} invasion barrier $\eps_\sigma$. For 2-player games any
ESS has a uniform invasion
barrier~\cite{JTB:HofbauerSS1979}. Milchtaich~\cite{RePEc:Milchtaich2008}
give a simple example of an ESS in a 4-player game without a uniform
invasion barrier.

The following simple lemma due to Broom~et~al.~\cite{BMB:BroomCV1997}
provides a useful alternative characterization of an ESS.
\begin{lemma}
  \label{LEM:ESSdef}
  A strategy $\sigma$ is ES against $\tau$ if and only if there exists
  $0\leq j <m$ such that
  $u(\sigma; \tau^j, \sigma^{m-1-j}) > u(\tau; \tau^j, \sigma^{m-1-j})$ and that for all $0\leq i<j$,   $u(\sigma; \tau^i, \sigma^{m-1-i}) = u(\tau; \tau^i, \sigma^{m-1-i})$.
\end{lemma}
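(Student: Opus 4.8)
The plan is to write the quantity $f(\eps) = u(\sigma;\tau_\eps^{m-1}) - u(\tau;\tau_\eps^{m-1})$ explicitly as a polynomial in $\eps$ and then read off its sign for small positive $\eps$ from its lowest-degree term. The key observation is that the payoff function $u$ is multilinear in the opponents' mixed strategies, so since every opponent plays the same mixture $\tau_\eps = \eps\tau + (1-\eps)\sigma$, expanding is the same as letting each opponent independently play $\tau$ with probability $\eps$ and $\sigma$ with probability $1-\eps$. By the symmetry assumption the resulting payoff depends only on the number $i$ of opponents playing $\tau$, which is binomially distributed, giving for $\rho\in\{\sigma,\tau\}$
\[
  u(\rho;\tau_\eps^{m-1}) = \sum_{i=0}^{m-1}\binom{m-1}{i}\eps^i(1-\eps)^{m-1-i}\,u(\rho;\tau^i,\sigma^{m-1-i}) \enspace .
\]
Setting $d_i = u(\sigma;\tau^i,\sigma^{m-1-i}) - u(\tau;\tau^i,\sigma^{m-1-i})$, this yields $f(\eps) = \sum_{i=0}^{m-1}\binom{m-1}{i}\eps^i(1-\eps)^{m-1-i}d_i$, a polynomial in $\eps$ of degree at most $m-1$.

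Next I would locate the lowest-degree nonzero term of $f$. The $i$-th summand contributes only to powers $\eps^i,\dots,\eps^{m-1}$, and the coefficient of $\eps^i$ within that summand equals $\binom{m-1}{i}d_i$, coming from the constant term of $(1-\eps)^{m-1-i}$. Hence, assuming not all $d_i$ vanish, if $j$ is the smallest index with $d_j\neq 0$, then the summands with $i<j$ are identically zero while those with $i>j$ contribute only to strictly higher powers; so the coefficient of $\eps^j$ in $f$ equals $\binom{m-1}{j}d_j\neq 0$ and all lower coefficients vanish. Therefore $f(\eps)/\eps^j \to \binom{m-1}{j}d_j$ as $\eps\to 0^+$, and since $\binom{m-1}{j}>0$ we conclude $\sgn f(\eps) = \sgn d_j$ for all sufficiently small $\eps>0$.

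With this in hand the equivalence follows directly from Definition~\ref{DEF:ESS}. If all $d_i = 0$, then $f\equiv 0$, so Equation~(\ref{EQ:ESdef}) fails and $\sigma$ is not ES against $\tau$. Otherwise, letting $j$ be the first index with $d_j \neq 0$: when $d_j>0$ we have $f(\eps)>0$ for all small $\eps>0$, witnessing that $\sigma$ is ES against $\tau$; when $d_j<0$ we have $f(\eps)<0$ for small $\eps>0$, so $\sigma$ is not ES against $\tau$. The right-hand condition of the lemma---existence of $0\leq j<m$ with $d_j>0$ and $d_i=0$ for all $0\leq i<j$---is precisely the statement that the first nonvanishing $d_i$ is positive, which settles both directions.

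The only point needing genuine care is the multilinear binomial expansion in the first step, and in particular the appeal to symmetry guaranteeing that $u(\rho;\tau^i,\sigma^{m-1-i})$ is well defined independently of \emph{which} opponents play $\tau$; everything after that is the elementary fact that a univariate polynomial's sign near $0^+$ is governed by its lowest-degree term.
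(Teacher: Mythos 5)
Your proof is correct and complete: the multilinear binomial expansion of $u(\rho;\tau_\eps^{m-1})$, justified by symmetry, and the sign analysis of the lowest-degree nonzero coefficient of $f(\eps)$ correctly handle all three cases (first nonzero $d_j$ positive, first nonzero $d_j$ negative, and $f\equiv 0$). The paper itself offers no proof of this lemma, citing it to Broom, Cannings, and Vickers; your argument is exactly the standard one behind that citation, so there is nothing to flag.
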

For the case of 2-player games, this alternative characterization is
actually the original definition of an ESS given by Maynard Smith and
Price~\cite{Nature:MaynardSmithP1973}, and the definition of an ESS we
use was stated for the case of 2-player games by Taylor and
Jonker~\cite{MBS:TaylorJ1978}. A straightforward corollary of the
characterization is that if $\sigma$ is an ESS then $\sigma^m$ is a
SNE.

By the \emph{support} of an ESS $\sigma$, Supp($\sigma$), we refer to
the set of pure strategies $i$ that are played with non-zero
probability under the strategy $\sigma$.

\subsection{Locally Superior Strategies}

A concept strongly related to evolutionary stable strategies is that
locally superior strategies.
\begin{definition}
  \label{DEF:LocallySuperior}
  A strategy $\sigma$ is a locally superior strategy (LSS) if there
  exists $\eps_\sigma>0$ such that
  $u(\sigma;\tau^{m-1}) > u(\tau^{m})$ for all $\tau$ satisfying
  $0<\norm{\sigma-\tau}_1<\eps_\sigma$.
\end{definition}
It is easy to see that $\sigma$ is locally superior if and only if
$\sigma$ is an ESS with a uniform invasion
barrier~\cite{JMB:Palm1984}. Indeed, with $\tau_\eps$ as in
Definition~\ref{DEF:ESS} we have $u(\tau_\eps^m)=\eps u(\tau; \tau_\eps^{m-1}) + (1-\eps)u(\sigma; \tau_\eps^{m-1})$, from which it follows that for $0<\eps<1$ we have
$u(\sigma; \tau_\eps^{m-1}) > u(\tau; \tau_\eps^{m-1})$ if and only if
$u(\sigma; \tau_\eps^{m-1}) > u(\tau_\eps^m)$.

Since as stated, in the case of 2-player games any ESS has a uniform
invation barrier, it follows that the notions of ESS and LSS coincide
for 2-player games.

\subsection{Real Computational Complexity}
While we are mainly interested in the computational complexity of
discrete problems, it is useful to discuss a model of computation
operating on real-valued input. We use this to define the complexity
class $\existsDOP \coETR$, used to formulate our main result.  Alternatively we may
simply define this class in terms of a restriction of the decision
problem for the first-order theory of the reals, as explained in the
next subsection. The reader may thus defer reading this subsection.

A standard model for studying computational complexity in the setting
of reals is that of Blum-Shub-Smale (BSS)
machines~\cite{BAMS:BlumSS1989}. A BSS machine takes a vector $x \in \RR^n$
as an input and performs arithmetic operations and comparisons at
unit cost. In addition the machine may be equipped with a finite set
of real-valued \emph{machine constants}. In this way a BSS machine
accepts a \emph{real language} $L \subseteq \RR^\infty$, where
$\RR^\infty = \bigcup_{n\geq 0} \RR^n$. Imposing polynomial time
bounds we obtain the complexity classes $\PTIME_\RR$ and $\NP_\RR$ for
deterministic and nondeterministic BSS machines, respectively, forming
real-valued analogues of $\PTIME$ and
$\NP$. Cucker~\cite{TCJ:Cucker1993} defined the real analogue
$\PH_\RR$ of the polynomial time hierarchy formed by the classes
$\mathrm{\Sigma}^\RR_k$ and $\mathrm{\Pi}^\RR_k$, for $k\geq 1$. The class
$\mathrm{\Sigma}^\RR_{k+1}$ may be defined as real languages accepted by a
nondeterministic \emph{oracle} BSS machine in polynomial time using an
oracle language from $\mathrm{\Sigma}^\RR_k$ with $\mathrm{\Sigma}^\RR_1=\NP_\RR$, and
$\mathrm{\Pi}^\RR_k$ is simply the class of complements of languages of
$\mathrm{\Sigma}^\RR_k$. For natural problems such as \textsc{TSP} or
\textsc{Knapsack} with real-valued input the search space remains
discrete. Goode~\cite{JSL:Goode1994} introduced the notion of
\emph{digital nondeterminism} (cf.\ \cite{MST:CuckerM96}) restricting
nondeterministic guesses to the set $\zo$, which when imposing
polynomial time bounds define the class $\DNP_\RR$. One may also
define a polynomial hierarchy based on digital nondeterminism giving
rise to classes $\DSigma^\RR_k$ and $\DPi^\RR_k$, for $k\geq 1$.

Another convenient way to define the classes described above is by
means of complexity class \emph{operators} (cf.\
\cite{CCC:Zachos1986,TCS:BorchertS2001}). Here we shall consider
existential or universal quantifiers over either real-valued or
Boolean variables whose number is bounded by a polynomial. For a real
complexity class $\calC$, define $\existsROP\calC$ as the class of real
languages $L$ for which there exists $L' \in \calC$ and a polynomial
$p$ such that $x \in L$ if and only if
$\exists y \in \RR^{\leq p(\Abs{x})} : \pairing{x}{y} \in L'$. For a
real (or discrete) complexity class $\calC$, define $\existsDOP\calC$
as the class of real (or discrete) languages $L$ for which there
exists $L' \in \calC$ and a polynomial $p$ such that $x \in L$ if and
only if $\exists y \in \zo^{\leq p(\Abs{x})} : \pairing{x}{y} \in
L'$. Replacing existential quantifiers with universal quantifiers we
analogously obtain definitions of classes $\forallROP\calC$ and
$\forallDOP\calC$. We now have that
$\mathrm{\Sigma}^\RR_{k+1}=\existsROP\mathrm{\Pi}^\RR_{k}$,
$\DSigma^\RR_{k+1}=\existsDOP\DPi^\RR_{k}$, as well as
$\Sigmap_{k+1}=\existsDOP\Pip_k$, for $k\geq 1$. We shall also
consider mixing real and discrete operators. In such cases one may not
always have an equivalent definition in terms of oracle machines. For
instance, while $\existsROP\coDNP_\RR = \NP_\RR^{\DNP_\RR}$ we can only
prove the inclusion $\existsDOP\coNP_\RR \subseteq \DNP_\RR^{\NP_\RR}$
and in particular we do not know if
$\NP_\RR \subseteq \existsDOP\coNP_\RR$.

To study discrete problems we define the \emph{Boolean part} of a real
language $L \subseteq \RR^\infty$ as $\BP(L) = L \cap \zo^*$ and of
real complexity classes $\calC$ as
$\BP(\calC)=\{\BP(L) \mid L \in \calC\}$. The Boolean part of a real
complexity class is thus a discrete complexity class and may be compared
with other discrete complexity classes defined for instance using
Turing machines. Furthermore, since we are interested in
\emph{uniform} discrete complexity we shall disallow machine
constants. Indeed, a single real number may encode an infinite
sequence of discrete advice strings, which for instance implies
that $\PTIME/\poly \subseteq \BP(\PTIME_\RR)$. For a class $\calC$
defined above we denote by $\calC^0$ the analogously defined class
without machine constants. Several classes given by Boolean parts of
constant-free real complexity are defined specifically in the
literature. Most prominently is the class $\BP(\NP^0_\RR)$ which also
captures the complexity of the existential theory of the reals. It has
been named $\cETR$ by Schaefer and
Štefankovič~\cite{TOCS:SchaeferS2017} as well as $\mathrm{NPR}$ by
Bürgisser and Cucker~\cite{FCM:BurgisserC2009}; we shall use the
former notation $\cETR$. We further let
$\coETR=\BP(\coNP^0_\RR)$ as well as
$\cExistsForallR=\BP(\mathrm{\Sigma}^{\RR,0}_2)=\existsROP\coETR$ and
$\cForallExistsR=\BP(\mathrm{\Pi}^{\RR,0}_2)=\forallROP\cETR$. We shall in
particular be interested in the class $\existsDOP\coETR$. Clearly,
from the definitions above we have that this class contains both the
familiar classes $\coETR$ and $\Sigmap_2$ and is itself contained in
$\cExistsForallR$. In fact $\existsDOP\coETR$ contains the class
$(\Sigmap_2)^\PosSLP$, where $\PosSLP$ is the problem of deciding
whether an integer given by a division free arithmetic circuit is
positive, as introduced by
Allender~et~al.~\cite{SICOMP:AllenderBKM2009}. This follows since
$\PTIME^\PosSLP=\BP(\PTIME^0_\RR)$~\cite[Proposition~1.1]{SICOMP:AllenderBKM2009},
and thus
\[
  \begin{split}
  (\Sigmap_2)^\PosSLP &= \existsDOP \forallDOP \PTIME^\PosSLP =
  \existsDOP\forallDOP \BP(\PTIME^0_\RR) \\&\subseteq \existsDOP \BP( \forallROP \PTIME^0_\RR) = \existsDOP \BP(\coNP^0_\RR) = \existsDOP \coETR \enspace .
\end{split}
\]

\subsection{The First-Order Theory of the Reals}
The discrete complexity classes $\BP(\Sigma^{\RR,0}_k)$ and
$\BP(\Pi^{\RR,0}_k)$ may alternatively be characterized using the
decision problem for the first-order theory of the reals. We denote by
$\theory(\RR)$ the set of all true first-order sentences over the
reals. We shall consider the restriction to sentences in \emph{prenex normal form}
\begin{equation}
  \left(Q_1 x_1 \in \RR^{n_1}\right)\cdots\left(Q_k x_k \in \RR^{n_k}\right) \varphi(x_1,\dots,x_k) \enspace ,
\end{equation}
where $\varphi$ is a quantifier free Boolean formula of equalities and
inequalities of polynomials with integer coefficients, where each
$Q_i$ is one of the quantifiers $\exists$ or $\forall$, typically
alternating, and gives rise to~$k$ blocks of quantified variables. The
restriction of $\theory(\RR)$ to formulas in prenex normal form with
$k$ being a fixed constant and also $Q_1=\exists$ is complete for
$\BP(\Sigma^{\RR,0}_k)$; when instead $Q_1=\forall$ it is complete for
$\BP(\Pi^{\RR,0}_k)$. In particular, the \emph{existential theory of
  the reals} $\Etheory(\RR)$, where $k=1$ and $Q_1=\exists$, is complete
for $\cETR$. Similarly $\theory_{\forall\exists}(\RR)$ where $k=2$ and
$Q_1=\forall$ is complete for $\cForallExistsR$; when we furthermore
restrict the first quantifier block to Boolean variables the problem
becomes complete for $\existsDOP \coETR$.

\subsection{Real Polynomials with Discrete Quantification}
In this section we shall prove that the following problem,
$\forallD\HOMFOURFEAS(\Simplex)$, is complete for the complexity class
$\forallDOP \cETR$. In Section~\ref{SEC:ESS} and Section~\ref{SEC:LSS}
we use the complement of this problem to prove our main results of
$\existsDOP \coETR$-hardness of $\ExistsESS$ and of $\ExistsLSS$.

Denote by $\Simplex^n \subseteq \RR^{n+1}$ the $n$-simplex
$\{x \in \RR^{n+1} \mid x\geq 0 \wedge \sum_{i=1}^{n+1} x_i = 1\}$ and
similarly by $\CornerSimplex^n \subseteq \RR^n$ the corner
$n$-simplex
$\{x \in \RR^n \mid x\geq 0 \wedge \sum_{i=1}^n x_i \leq 1\}$.
\begin{definition}[$\forallD\HOMFOURFEAS(\Simplex)$]
  \label{DEF:forallDHOMFOURFEASSimplex}
  For the problem $\forallD\HOMFOURFEAS(\Simplex)$ we are given as
  input rational coefficients $a_{i,\alpha}$, where
  $i \in \{0,\dots,n\}$ and $\alpha \in [m]^4$, forming the polynomial
\[
  F(y,z) = F_0(z) + \sum_{i=1}^n y_i F_i(z) \enspace ,
\]
where
\[
  F_i(z) = \sum_{\alpha \in [m]^4} a_{i,\alpha} \prod_{j=1}^4 z_{\alpha_j} \text{ , for } i=0,\dots,n \enspace .
\]
We are to decide whether for all $y \in \zo^n$
there exists $z \in \Simplex^{m-1}$ such that $F(y,z)=0$.
\end{definition}

Analogously to the fact that a matrix representing a quadratic form
may be assumed to be symmetric, we may assume that the polynomials of
Definition~\ref{DEF:forallDHOMFOURFEASSimplex} are \emph{symmetrized}.
\begin{definition}
  \label{DEF:Symmetrization}
  For $\alpha \in [m]^4$ and a permutation $\pi$ on $[4]$, define
  $\pi\cdot\alpha \in [m]^4$ by $(\pi\cdot\alpha)_i=\alpha_{\pi(i)}$.
  We say that a homogenous polynomial $G$ given in the form
\begin{equation}
\label{EQ:SymmetrizationPolynomial}
  G(z) = \sum_{\alpha \in [m]^4} b_{\alpha} \prod_{j=1}^4 z_{\alpha_j}
\end{equation}
is \emph{symmetrized} if $b_{\alpha} = b_{\pi\cdot\alpha}$ for all $\alpha$ and $\pi$.
\end{definition}
\begin{lemma}
\label{LEM:Symmetrization}
Any homogeneous polynomial $G$ in the form of
Equation~(\ref{EQ:SymmetrizationPolynomial}) is, as a function, equal
to a symmetrized homogeneous polynomial $H$ in the same form.
\end{lemma}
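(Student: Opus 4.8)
The plan is to obtain $H$ from $G$ by averaging the coefficients over all permutations of the four index positions, i.e.\ by applying a Reynolds-type symmetrization operator. Concretely, for each $\alpha \in [m]^4$ I would define
\[
  c_\alpha = \frac{1}{4!} \sum_{\pi} b_{\pi\cdot\alpha} \enspace ,
\]
where $\pi$ ranges over all $4!=24$ permutations on $[4]$, and set $H(z) = \sum_{\alpha \in [m]^4} c_\alpha \prod_{j=1}^4 z_{\alpha_j}$. By construction $H$ is a homogeneous polynomial in the form of Equation~(\ref{EQ:SymmetrizationPolynomial}), so it remains only to check that it is symmetrized and that it agrees with $G$ as a function.

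First I would verify that $H$ is symmetrized, that is, $c_\alpha = c_{\pi'\cdot\alpha}$ for every permutation $\pi'$ on $[4]$. The one computation needed here is that the action composes as $\pi\cdot(\pi'\cdot\alpha) = (\pi'\circ\pi)\cdot\alpha$, which is immediate from the definition $(\pi\cdot\alpha)_i = \alpha_{\pi(i)}$. Since $\pi \mapsto \pi'\circ\pi$ is a bijection of the permutation group, the average defining $c_{\pi'\cdot\alpha}$ runs over exactly the same collection of coefficients $b_{\rho\cdot\alpha}$ as the average defining $c_\alpha$, whence the two are equal.

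Next I would establish that $H = G$ as functions. The key observation is that the monomial $\prod_{j=1}^4 z_{\alpha_j}$ depends only on the multiset $\{\alpha_1,\dots,\alpha_4\}$ and is therefore invariant under the action, so that $\prod_{j=1}^4 z_{\alpha_j} = \prod_{j=1}^4 z_{(\pi\cdot\alpha)_j}$ for every $\pi$. Substituting the definition of $c_\alpha$ and interchanging the two finite sums yields
\[
  H(z) = \frac{1}{4!}\sum_{\pi} \sum_{\alpha \in [m]^4} b_{\pi\cdot\alpha} \prod_{j=1}^4 z_{(\pi\cdot\alpha)_j} \enspace .
\]
For each fixed $\pi$ the reindexing $\beta = \pi\cdot\alpha$ is a bijection of $[m]^4$ onto itself, so the inner sum equals $\sum_{\beta \in [m]^4} b_\beta \prod_{j=1}^4 z_{\beta_j} = G(z)$. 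Averaging $G(z)$ over the $4!$ permutations returns $G(z)$, giving $H = G$.

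I do not expect any genuine obstacle in this argument; it reduces to routine manipulation of finite sums. The only point requiring care is to fix the convention for the action of permutations on indices from Definition~\ref{DEF:Symmetrization} and use it consistently across both steps, since a left/right mismatch in the composition law would invalidate either the symmetry check or the bijective reindexing $\beta = \pi\cdot\alpha$. With that convention in place, both verifications are immediate.
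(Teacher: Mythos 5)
Your proposal is correct and takes essentially the same approach as the paper: define $c_\alpha = \frac{1}{24}\sum_{\pi} b_{\pi\cdot\alpha}$ and let $H$ be the polynomial with these coefficients. The only difference is that the paper asserts the two required properties (that $H$ is symmetrized and that $G=H$ as functions) without proof, whereas you carry out the routine verifications via the composition law $\pi\cdot(\pi'\cdot\alpha)=(\pi'\circ\pi)\cdot\alpha$ and the bijective reindexing $\beta=\pi\cdot\alpha$, both of which are correct.
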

\begin{proof}
  Let $c_{\alpha} = \frac{1}{24}\sum_{\pi} b_{\pi \cdot \alpha}$ for
  all $\alpha$, and define $H$ by
  \[
    H(z) = \sum_{\alpha \in [m]^4} c_{\alpha} \prod_{j=1}^4 z_{\alpha_j}  \enspace .
  \]
  Then $H$ is clearly symmetrized and it holds that $G(z)=H(z)$ for
  all $z$.
\end{proof}

We next turn to the proof of $\forallDOP \cETR$-hardness of
$\forallD\HOMFOURFEAS(\Simplex)$. The proof is mainly a combination of
existing ideas and proofs, and the reader may thus defer reading it.

%Due to lack of space we present the
%proof in the appendix Section~\ref{SSEC:THM:forallHOM4FEAS} and the
%reader may defer reading it.
\begin{theorem}
  \label{THM:forallHOM4FEAS}
  The problem $\forallD\HOMFOURFEAS(\Simplex)$ is complete for
  $\forallDOP\cETR$, and remains $\forallDOP\cETR$-hard even with the promise that
  for all $y \in \zo^n$ and $z \in \RR^m$ it holds that
  $F(y,z)\geq 0$.
\end{theorem}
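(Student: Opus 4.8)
The plan is to treat the two directions separately. Membership in $\forallDOP\cETR$ is immediate from the quantifier structure: for a fixed Boolean assignment $y\in\zo^n$ the inner question ``does there exist $z\in\Simplex^{m-1}$ with $F(y,z)=0$'' is a single existential sentence over the reals, hence decidable in $\cETR$, and prefixing the universal Boolean block over $y$ places the whole problem in $\forallDOP\cETR$. The promise plays no role in this direction; the upper bound holds for all instances.

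For hardness I would reduce from the natural $\forallDOP\cETR$-complete problem, namely deciding truth of a sentence $\forall y\in\zo^n\,\exists x\in\RR^k\,\varphi(y,x)$ with $\varphi$ quantifier-free (the universal-Boolean restriction of $\theory_{\forall\exists}(\RR)$, dual to the complete problem for $\existsDOP\coETR$ recalled above). The first move is to decouple the Boolean variables from the real formula: introduce fresh real variables $v_1,\dots,v_n$, replace every occurrence of $y_i$ in $\varphi$ by $v_i$ to obtain a purely real quantifier-free formula $\varphi(v,x)$, and defer the job of forcing $v_i=y_i$ to a separate gadget. The $y$-free part $\exists (v,x)\,\varphi(v,x)$ is an instance of the existential theory of the reals, and by the standard chain underlying $\cETR$-completeness (normalize to a conjunction of polynomial equalities with slack and selector variables, flatten to a quadratic system as in \QUAD{}, take a sum of squares to obtain a single quartic as in \FOURFEAS{}, and embed the real variables into a simplex following Hansen~\cite{TCS:Hansen2019}) I obtain a homogeneous degree-$4$ polynomial $\Phi(z)$, written as a sum of squares of homogeneous quadratics, that is nonnegative everywhere and vanishes on $\Simplex^{m-1}$ exactly at the points encoding a solution of $\varphi(v,x)$. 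Crucially $\Phi$ carries no $y$, so it will sit entirely inside $F_0$.

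The heart of the construction is a gadget communicating each Boolean value $y_i$ into the real coordinate $v_i$ while keeping $F$ affine in $y$ and keeping every specialization nonnegative. Writing $s=\sum_{j} z_j$ (so $s=1$ on the simplex), I set
\[
  F_0(z)=\Phi(z)+\sum_{i=1}^n (v_i s)^2,\qquad F_i(z)=s^3\bigl(s-2v_i\bigr)\quad(i=1,\dots,n),
\]
both homogeneous of degree $4$. For fixed $y$ with active set $Y=\{i:y_i=1\}$, the specialization recombines globally as
\[
  F(y,z)=\Phi(z)+\sum_{i\notin Y}(v_i s)^2+\sum_{i\in Y}\bigl((v_i-s)s\bigr)^2 ,
\]
since $(v_is)^2+s^3(s-2v_i)=s^2(v_i-s)^2$. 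This is manifestly a sum of squares for every $y$, so $F(y,z)\ge 0$ for all $y\in\zo^n$ and all $z\in\RR^m$, establishing the promise; and on the simplex it equals $\Phi+\sum_{i\notin Y}v_i^2+\sum_{i\in Y}(v_i-1)^2$, which vanishes iff $\Phi=0$ and $v_i=y_i$ for all $i$, i.e.\ iff $\exists x\,\varphi(y,x)$. Hence $\forall y\,\exists z\in\Simplex^{m-1}\,F(y,z)=0$ holds iff the source sentence is true, and if a symmetrized form is wanted Lemma~\ref{LEM:Symmetrization} lets us take the $F_i$ symmetrized.

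The step I expect to be the main obstacle is reconciling three requirements that naively conflict: $F$ must be \emph{affine} in $y$ (so no product $y_iy_j$ may ever appear), each specialization $F(y,\cdot)$ must be globally nonnegative, and the Boolean value of $y_i$ must nonetheless pin the real coordinate $v_i$. A direct sum-of-squares combination of constraints that mention $y$ would reintroduce quadratic-in-$y$ terms, so the resolution is exactly the recombination above: $F_0$ alone is a sum of squares, each $F_i$ is a harmless multiple of $s^3$ (indefinite on its own), yet every admissible sum $F_0+\sum_{i\in Y}F_i$ is again a sum of squares. The one place where I would verify the cited machinery carefully is in arranging the Hansen embedding to deliver $\Phi$ as a genuine sum of squares of homogeneous quadratics, rather than merely nonnegative on the simplex, since the global promise depends on this.
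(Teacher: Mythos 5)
Your upper bound and your overall plan (decouple the Boolean block via real copies $v_i$, push the $y$-free part through the standard chain from \QUAD{} to \FOURFEAS{} to the simplex embedding, and re-attach consistency constraints affinely in $y$) match the paper's strategy, and the algebraic identity $(v_is)^2+s^3(s-2v_i)=s^2(v_i-s)^2$ is correct. But the consistency gadget itself is broken, because it forces $v_i=y_i$ \emph{as values of simplex coordinates}. The coordinates of $z\in\Simplex^{m-1}$ are nonnegative and sum to~$1$; hence if $y$ has two indices $i\neq j$ with $y_i=y_j=1$, the requirement $v_i=v_j=1$ is unsatisfiable outright, and even with a single $i$ such that $y_i=1$ the requirement $v_i=1$ forces every other coordinate of $z$ to be~$0$, which destroys the auxiliary coordinates that $\Phi$ needs: for instance the repeated-squaring variables of the embedding are pinned to nonzero values (such as $\frac{1}{2}$, after homogenization $\frac{1}{2}\sum_j z_j$) by $\Phi=0$, so $\Phi>0$ at any such degenerate point. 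Consequently, for every $y\neq\allzeros$ there is no $z\in\Simplex^{m-1}$ with $F(y,z)=0$, and your reduction maps every instance with $n\geq 1$ to a ``no'' instance regardless of the truth of the source sentence.

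There is a second, related oversight: after Hansen's embedding the simplex coordinates associated with $v_i$ do not carry the value $v_i$ but a rescaled value $v_iw_0$, where $w_0=2^{-2^t}$ is the doubly-exponentially small quantity produced by the repeated-squaring gadget (this rescaling is what makes room inside the simplex in the first place). So even ignoring the sum-to-one obstruction, pinning the coordinate to $y_i\in\zo$ would pin the encoded value to $y_i2^{2^t}$, not to $y_i$. Both problems are solved simultaneously if you impose consistency \emph{before} the embedding rather than after: add the polynomials $v_i-y_i$ to the quadratic system (each mentioning a single coordinate of $y$), let the corner-simplex reduction turn them into $w_0(v_i^+-v_i^-)-y_iw_0^2$, homogenize, and only then take the sum of squares, linearizing the resulting $y_i^2$ terms via $y_i^2=y_i$ on $\zo$. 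This is exactly the paper's proof, and the reason it works where your gadget does not is precisely that the forced coordinate value is $y_iw_0$ --- tiny, so any number of them coexist on the simplex and match the encoding used by the rest of the system --- rather than $y_i$ itself. Incidentally, the one worry you do flag, whether $\Phi$ can be arranged as a genuine sum of squares of homogeneous quadratics, is unproblematic: the chain ends with exactly such a sum of squares.
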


\begin{proof}
  We shall prove hardness of $\forallD\HOMFOURFEAS(\Simplex)$ by
  describing a general reduction from a language $L$ in
  $\forallDOP \cETR$ in several steps making use of reductions that
  proves several problems involving real polynomials
  $\cETR$-hard. Consider first the standard complete problem $\QUAD$
  for $\cETR$ which is that of deciding if a system of multivariate
  quadratic polynomials have a common
  root~\cite{Book:BlumCSS1998,TOCS:SchaeferS2017}. The general
  reduction from a language $L$ in $\cETR$ to $\QUAD$ works by
  treating the input $x$ as variables and computes, based \emph{only}
  on $\Abs{x}$ and not the actual value of $x$, a system of quadratic
  polynomials $q_i(x,y)$, $i=1,\dots,\ell$, where
  $y \in \RR^{p(\Abs{x})}$ for some polynomial $p$. The system has the
  property that for all $x$ it holds that $x \in L$ if and only if
  there exists $y$ such that $q_i(x,y)=0$, for all~$i$.

  Suppose now that $L \in \forallDOP \cETR$. Then there is $L'$ in
  $\cETR$ and a polynomial $p$ such that $x \in L$ if and only if
  $\forall y \in \zo^{p(\Abs{x})} : \pairing{x}{y} \in L'$. On input
  $x$ we may apply the reduction from $L'$ to $\QUAD$ and in this way
  obtain a system of quadratic equations $q_i(x,y,z)$,
  $i=1,\dots,\ell_1$ where $z \in \RR^{p_1(\Abs{x})}$, for some
  polynomial $p_1$, such that $\pairing{x}{y} \in L'$ if and only if
  there exists $z \in \RR^{p_1(\Abs{x})}$ such that $q_i(x,y,z)=0$ for
  all~$i$. At this point we may just treat $x$ as fixed constants, and
  we view the system as polynomials in variables $(y,z)$, suppressing
  the dependence on $x$ in the notation. Define $n=p(\Abs{x})$. We
  next introduce additional existentially quantified variables
  $w \in \RR^n$, substitute $w_i$ for $y_i$ in all polynomials, and
  then add new polynomials $w_i-y_i$, for $i \in [n]$. Renaming
  polynomials and bundling the existentially quantified variables we
  now have a system of quadratic polynomials $q_i(y,z)$,
  $i\in[\ell_2]$ where $z \in \RR^{m_2}$ and $m_2 \leq p_2(\Abs{x})$
  for some polynomial $p_2$, such that $x \in L$ if and only if
\[
  \forall y \in \zo^n \exists z \in \RR^{m_2} \forall i \in [\ell_2] : q_i(y,z)=0 \enspace ,
\]
and where each polynomial $q_i$ depends on at most~1 coordinate of
$y$.

For the next step we use that $\QUAD$ remains $\cETR$-hard when asking
for a solution in the unit ball~\cite{GD:Schaefer2009}, or analogously
in the corner simplex~\cite{TCS:Hansen2019}. Applying the reduction of
\cite[Proposition~2]{TCS:Hansen2019} we first rewrite each variable
$z_i$ as a difference $z_i=z^+_i-z^-_i$ of two non-negative real
variables $z^+_i$ and $z^-_i$ and then introduce additional
existentially quantified variables $w_0,\dots,w_t$ for suitable
$t=O(\log \tau+m_2)$, where $\tau$ is the maximum bitlength of the
coefficients of the given system. Then polynomials are added that
together implement $t$ steps of repeated squaring of $\frac{1}{2}$,
i.e.\ we add polynomials $w_t-\frac{1}{2}$, and $w_{j-1}-w_j^2$, for
$j \in [t]$, which means that any solution must then have
$w_0=2^{-2^t}$.

In the polynomial system we now substitute each occurrence of $z_i$ by
$(z^+_i-z^-_i)/w_0$ and afterwards multiply by $w_0^2$ in every
polynomial where a subtitution occurred, in order to clear $w_0$ from
the denominators. For suitable $t$ this implies that if for fixed $y$,
the given system of polynomials has a solution $z \in \RR^{m_2}$, then
the transformed system has a solution $(z^+,z^-,w)$ in
$\CornerSimplex^{2m_2+t+1}$.  Note that, since the variables $y_i$ are
not divided by $w_0$, the polynomials are no longer of degree at
most~2 after multiplication with $w_0^2$.  However, they remain of
degree at most~$2$ in the variables $(z^+,z^-,w)$.

Again, renaming polynomials and bundling the existentially quantified
variables we now have a system of polynomials $q_i(y,z)$,
$i\in[\ell_3]$, where $z \in \RR^{m_3}$ and $m_3 \leq p_3(\Abs{x})$
for some polynomial $p_3$, such that $x \in L$ if and only if
\[
  \forall y \in \zo^n \exists z \in \CornerSimplex^{m_3} \forall i \in [\ell_3] : q_i(y,z)=0 \enspace ,
\]
and where each polynomial $q_i$ depends on at most~1 coordinate of
$y$ and is of degree at most~$2$ in the variables~$z$.

The next step simply consists of homogenizing the polynomials in the
existentially quantified variables $z$. For this we simply introduce a
\emph{slack variable} $z_{m_3+1}=1-\sum_{i=1}^{m_3}z_i$ and homogenize
by multiplying terms by $\sum_{i=1}^{m_3+1}z_i$ or
$\sum_{i=1}^{m_3+1}\sum_{j=1}^{m_3+1}z_iz_j$ as needed. Letting $q'_i$
be the homogenization of $q_i$ we now have that $x \in L$ if and only if
\[
  \forall y \in \zo^n \exists z \in \Simplex^{m_3} \forall i \in [\ell_3] : q'_i(y,z)=0 \enspace ,
\]
and where each polynomial $q'_i$ depends on at most~1 coordinate of
$y$ and are homogeneous of degree~2 in the variables~$z$.

For the final step we reuse the idea of the reduction from $\QUAD$ to
$\FOURFEAS$, which merely takes the sum of the squares of every given
polynomial.  Thus we let
\[
  F(y,z)=\sum_{i=1}^{\ell_3} (q'(y,z))^2 \enspace .
\]
We note that $(q'(y,z))^2\geq 0$ for all $y$ and $z$ and is homogeneous
of degree~4 in the variables~$z$. Further, since $y_j^2=y_j$ for any
$y_j\in\zo$ we may replace all occurrences of $y_j^2$ by $y_j$ thereby
obtaining an equivalent polynomial (when $y\in\zo^n$) of the form of
Definition~\ref{DEF:forallDHOMFOURFEASSimplex}. We have that for every
fixed $y\in\zo^n$ and all $z\in\RR^m$ that $F(y,z)=0$ if and only if
$q_i(y,z)=0$ for all $i$. Thus $x \in L$ if and only if
\[
  \forall y \in \zo^n \exists z \in \Simplex^{m_3} F(y,z)=0 \enspace ,
\]
which completes the proof of hardness. Let us also note that the
definition of $F$ guarantees that $F(y,z)\geq 0$ for all $y \in \zo^n$
and $z \in \RR^m$. Since on the other
hand clearly $\forallD\HOMFOURFEAS(\Simplex) \in \forallDOP \cETR$ the result follows.
\end{proof}

As a special case, (when there are no universally quantified
variables) the proof gives a reduction from the $\cETR$-complete
problem $\QUAD$ to the problem $\HOMFOURFEAS(\Simplex)$, where we are
given as input a homogeneous degree~4 polynomial $F(z)$ in $m$
variables with rational coefficients and are to decide whether there
exists $z \in \Simplex^{m-1}$ such that $F(z)=0$. Also, we clearly
have that $\HOMFOURFEAS(\Simplex)$ is a member of $\cETR$ and
therefore have the following result.
\begin{theorem}
  \label{THM:HOM4FEAS}
  The problem $\HOMFOURFEAS(\Simplex)$ is complete for $\cETR$, and
  remains $\cETR$-hard even when assuming that for all $z \in \RR^m$ it
  holds that $F(z)\geq 0$.
\end{theorem}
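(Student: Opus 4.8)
The plan is to recognise $\HOMFOURFEAS(\Simplex)$ as exactly the special case $n=0$ of $\forallD\HOMFOURFEAS(\Simplex)$, in which the universally quantified Boolean block is empty, so that both directions can be read off from the proof of Theorem~\ref{THM:forallHOM4FEAS} with only cosmetic changes. For membership I would observe that $\HOMFOURFEAS(\Simplex)$ asks whether $\exists z\in\RR^m$ with $z\geq 0$, $\sum_{i=1}^m z_i=1$ and $F(z)=0$; this is a single existential sentence over the reals whose quantifier-free matrix is a Boolean combination of polynomial (in)equalities with rational coefficients, so the problem lies in $\Etheory(\RR)$ and hence in $\cETR$.

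For hardness I would reduce from the $\cETR$-complete problem $\QUAD$ by running the chain of transformations from the proof of Theorem~\ref{THM:forallHOM4FEAS}, deleting every step that concerns the variables $y$. Starting from a quadratic system $q_i(z)=0$, I would (i)~pass to a solution constrained to the corner simplex $\CornerSimplex$ via \cite[Proposition~2]{TCS:Hansen2019}; (ii)~homogenise in $z$ using a single slack variable, obtaining homogeneous degree-$2$ polynomials $q'_i$ with solutions now sought in $\Simplex$; and (iii)~form the sum of squares $F(z)=\sum_i (q'_i(z))^2$, which is homogeneous of degree~$4$ and vanishes exactly when all $q'_i$ do. With no $y_j$ present, the special handling of the identities $y_j^2=y_j$ and the degree blow-up caused by the factors $w_0^2$ both disappear, so the construction collapses to the classical $\QUAD$-to-$\FOURFEAS$ reduction composed with homogenisation.

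No genuine obstacle remains—the technical work was already carried out for Theorem~\ref{THM:forallHOM4FEAS}—and the promised inequality requires nothing extra: since $F$ is a sum of squares we have $F(z)\geq 0$ for all $z\in\RR^m$ automatically, so the reduction already outputs instances satisfying the stated assumption. The only point meriting a moment's care is checking that each intermediate reduction stays correct and polynomial-time when the universal block is empty, which holds because those reductions were applied uniformly in $\Abs{x}$ and never relied on the presence of the $y$ variables.
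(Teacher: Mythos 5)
Your proposal is correct and takes essentially the same route as the paper: the paper also obtains Theorem~\ref{THM:HOM4FEAS} by specializing the proof of Theorem~\ref{THM:forallHOM4FEAS} to the case of no universally quantified variables, yielding a reduction from $\QUAD$ to $\HOMFOURFEAS(\Simplex)$, with membership in $\cETR$ being immediate and the promise $F(z)\geq 0$ coming for free from the sum-of-squares construction. Your observation that the $y_j^2=y_j$ substitution and the degree issue caused by the $w_0^2$ factors vanish when the Boolean block is empty is accurate and consistent with the paper's remarks.
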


\section{Complexity of ESS}
\label{SEC:ESS}
In this section we shall prove our results for deciding existence of
an ESS. In the proof we will re-use a trick used by
Conitzer~\cite{MOR:Conitzer2019} for the case of 2-player games, where
by duplicating a subset of the actions of a game we ensure that no ESS
can be supported by any of the duplicated actions, as shown in the
following lemma. Here, by duplicating an action we mean that the utilities
assigned to any pure strategy profile involving the duplicated
action is defined to be equal to the utility for the pure strategy
profile obtained by replacing occurrences of the duplicated action
by the original action. The precise property is as follows.
\begin{lemma}
\label{LEM:NoDuplicate}
Let $\calG$ be an $m$-player symmetric game given by $S$ and $u$. Suppose
that $s,s' \in S$ are such that for all strategies $\tau$ we have
$u(s;\tau^{m-1})=u(s';\tau^{m-1})$.  Then $s$ can not be in the
support of an ESS $\sigma$.
\end{lemma}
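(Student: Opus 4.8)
The plan is to argue by contradiction. Suppose $\sigma$ is an ESS with $s \in \support(\sigma)$, so that $\sigma_s > 0$. I would exhibit a single mutant strategy $\tau \neq \sigma$ against which $\sigma$ fails the strict inequality~(\ref{EQ:ESdef}), so that $\tau$ invades $\sigma$, contradicting Definition~\ref{DEF:ESS}. The mutant is obtained by shifting a tiny amount of probability mass from $s$ to $s'$: writing $e_s, e_{s'}$ for the pure strategies that place all probability on $s$ and $s'$, set $\tau = \sigma + \delta(e_{s'} - e_s)$ for a parameter $\delta > 0$. Since $\sigma_s > 0$, for every $\delta$ with $0 < \delta \leq \sigma_s$ the vector $\tau$ has nonnegative entries summing to~$1$, hence $\tau \in \Delta(S)$; and as $s \neq s'$ and $\delta > 0$ we have $\tau \neq \sigma$.

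The crux is the observation that $\sigma$ and $\tau$ are payoff-equivalent against \emph{every} symmetric opponent profile. The payoff $u(\,\cdot\,; \rho^{m-1})$ is linear in its first argument, so for any strategy $\rho$ one computes
$u(\sigma; \rho^{m-1}) - u(\tau; \rho^{m-1}) = \delta\bigl(u(s; \rho^{m-1}) - u(s'; \rho^{m-1})\bigr)$,
and this quantity vanishes by the hypothesis that $s$ and $s'$ are duplicates. Applying this identity with $\rho = \tau_\eps$ (which is itself an element of $\Delta(S)$) gives $u(\sigma; \tau_\eps^{m-1}) = u(\tau; \tau_\eps^{m-1})$ for every $\eps$. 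Consequently the strict inequality $u(\sigma; \tau_\eps^{m-1}) > u(\tau; \tau_\eps^{m-1})$ demanded by~(\ref{EQ:ESdef}) holds for no $\eps > 0$, so no invasion barrier $\eps_\tau$ exists; that is, $\sigma$ is not ES against $\tau$. This contradicts the assumption that $\sigma$ is an ESS, completing the argument.

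I do not anticipate a genuine obstacle: the argument is short and the multilinearity of $u$ together with the duplicate hypothesis does all the work. The only point requiring care is checking that $\tau$ is a legitimate mixed strategy for small $\delta$, and this is precisely where the assumption $s \in \support(\sigma)$ enters, since it guarantees $\sigma_s > 0$ and hence room to decrease the weight on $s$. It is worth emphasizing that the conclusion needs only a single such $\tau$, so no uniformity over invading strategies or over $\eps$ is required.
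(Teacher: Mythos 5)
Your proof is correct and follows essentially the same route as the paper: construct a mutant by shifting probability mass from $s$ to $s'$, use linearity of $u$ in the first argument together with the duplicate hypothesis to conclude that $\sigma$ and the mutant are payoff-equivalent against $\tau_\eps^{m-1}$ for every $\eps$, so the strict inequality~(\ref{EQ:ESdef}) never holds and the mutant invades. The only (inessential) difference is that the paper moves \emph{all} of the mass on $s$ to $s'$, whereas you move a small amount $\delta \leq \sigma_s$; both choices yield a valid invading strategy $\tau \neq \sigma$.
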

\begin{proof}
  Suppose $\sigma$ is a strategy with $s \in \support(\sigma)$. Let
  $\sigma'$ be obtained from $\sigma$ by moving the probability mass
  of $s$ to $s'$. From our assumption we then have
  $u(\sigma;\tau^{m-1})=u(\sigma';\tau^{m-1})$ for all $\tau$. In
  particular we have
  $u(\sigma;\sigma_\eps^{m-1})=u(\sigma';\sigma_\eps^{m-1})$, for all
  $\eps>0$, where $\sigma_\eps$ is given by
  $\sigma_\eps=\eps\sigma'+(1-\eps)\sigma$. This means that $\sigma'$
  invades $\sigma$ and $\sigma$ is therefore not an ESS.
\end{proof}

% Before giving our main proof, we shall first show the following simple
% statement.

We now state and prove our first main result.
\begin{theorem}
  \label{THM:ExistsESSHardness}
  $\ExistsESS$ is $\existsDOP\coETR$-hard for 5-player games.
\end{theorem}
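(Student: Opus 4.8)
The plan is to reduce from the complement of the $\forallDOP\cETR$-complete problem $\forallD\HOMFOURFEAS(\Simplex)$ established in Theorem~\ref{THM:forallHOM4FEAS}. That is, given an instance $F(y,z)$ of $\forallD\HOMFOURFEAS(\Simplex)$ with the promise $F(y,z)\ge 0$ for all $y\in\zo^n$, $z\in\RR^m$, I want to build in polynomial time a $5$-player symmetric game $\calG$ that has an ESS if and only if there exists $y\in\zo^n$ such that \emph{no} $z\in\Simplex^{m-1}$ satisfies $F(y,z)=0$ --- i.e.\ if and only if the $\forallD\HOMFOURFEAS(\Simplex)$ instance is a \emph{no}-instance. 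Since the complement of a $\forallDOP\cETR$-complete problem is $\existsDOP\coETR$-complete (the $y$-block stays Boolean and existential, the $z$-block becomes universal over the reals, and the equation $F=0$ becomes the negation $F\neq 0$, which combined with the promise $F\ge 0$ reads $F>0$, a $\coETR$-style condition), this establishes the claimed hardness.

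First I would design the action set so that each $y\in\zo^n$ corresponds to a choice of supports: following Conitzer's and Ko--Lin's ideas, I would have a block of ``selector'' actions, typically a pair $\{y_i^0,y_i^1\}$ per Boolean variable, so that the support of a candidate ESS $\sigma$ on these actions encodes a vector $y\in\zo^n$, together with a block of actions corresponding to the simplex variables $z_1,\dots,z_m$ whose probabilities under an invading strategy $\tau$ encode a point of $\Simplex^{m-1}$. The key mechanism is Lemma~\ref{LEM:ESSdef}: since the game is $5$-player, the invasion condition involves the payoff polynomials $u(\sigma;\tau^j,\sigma^{m-1-j})$ for $j=0,\dots,4$, which are polynomials of degree up to~$4$ in the mixing parameter and in $\tau$. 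I would engineer the utilities so that the degree-$4$ homogeneous polynomial $F_y(z)$ (the instance $F$ with $y$ plugged in) appears as the top-order invasion term: a deviation $\tau$ toward the $z$-block invades the candidate $\sigma=\sigma_y$ precisely when $F_y(z)\le 0$ for the encoded $z$, so that $\sigma_y$ survives all such invasions exactly when $F_y(z)>0$ for all $z\in\Simplex^{m-1}$. The degree-$4$ homogeneity of $F$ is exactly why $5$ players (four ``opponent'' slots) suffice, mirroring Hansen's direct translation in~\cite{TCS:Hansen2019}.

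The delicate part of the construction has two facets that I expect to be the main obstacle. First, I must ensure that the only candidate ESSs are the ``clean'' support-encoding strategies $\sigma_y$ and not spurious mixtures: this is where Lemma~\ref{LEM:NoDuplicate} is used, by duplicating the $z$-block actions (and possibly auxiliary actions) so that no ESS can place probability mass on them, forcing any ESS to be supported only on the selector block and hence to encode a genuine Boolean $y$. Second, for a fixed encoded $y$ I must verify both directions of the ESS condition through \emph{all} relevant orders $j$ of Lemma~\ref{LEM:ESSdef}: for invasions that move mass within the selector block I need lower-order terms ($j=0,1$) to strictly favor $\sigma_y$ (making $\sigma_y$ a strict symmetric Nash equilibrium against such moves), while for invasions toward the $z$-block the lower-order terms must be arranged to \emph{vanish} so that the decisive comparison is pushed to the top order $j$ where $F_y$ governs the sign. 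Balancing these --- getting strict dominance against the ``wrong'' deviations while leaving a tie at low order precisely along the $z$-directions so that $F_y$ is the tiebreaker --- is the technical heart and the step most likely to require careful, game-specific gadgetry.

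Finally I would assemble the pieces: argue that if some $y$ is a no-instance witness (so $F_y>0$ on the simplex) then the corresponding $\sigma_y$ resists every invading $\tau\neq\sigma_y$ and is therefore an ESS; and conversely that any ESS of $\calG$, being supported only on selector actions by Lemma~\ref{LEM:NoDuplicate}, encodes some $y$ for which the surviving-invasion analysis forces $F_y(z)>0$ for all $z\in\Simplex^{m-1}$, yielding a no-instance witness. Together these give the desired equivalence and complete the reduction, establishing that $\ExistsESS$ is $\existsDOP\coETR$-hard for $5$-player games.
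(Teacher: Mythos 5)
Your overall architecture coincides with the paper's: reduce from the complement of $\forallD\HOMFOURFEAS(\Simplex)$ (Theorem~\ref{THM:forallHOM4FEAS}), encode $y$ in the support of a candidate ESS, encode $z$ as an invader mixing over a simplex block, duplicate that block so that Lemma~\ref{LEM:NoDuplicate} keeps it out of any ESS support, and use Lemma~\ref{LEM:ESSdef} so that the degree-$4$ quantity $F(y,z)$ decides invasion at the top order while lower orders tie. Your complexity bookkeeping is also correct: the complement of a $\forallDOP\cETR$-complete problem is $\existsDOP\coETR$-complete, and the promise $F\geq 0$ turns the condition $F\neq 0$ into $F>0$.

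However, there is a genuine gap, and it sits exactly where you wave it away. You claim that duplicating the $z$-block, via Lemma~\ref{LEM:NoDuplicate}, forces ``any ESS to be supported only on the selector block and \emph{hence} to encode a genuine Boolean $y$.'' The ``hence'' does not follow: Lemma~\ref{LEM:NoDuplicate} only removes duplicated actions from the support; it says nothing about which mixtures \emph{within} the selector block can be evolutionarily stable. A priori an ESS could put mass on both actions of a pair $\{y_i^0,y_i^1\}$, skip some pairs entirely, or weight them non-uniformly, in which case it encodes no Boolean $y$ and the converse direction of your reduction collapses. In the paper this is where most of the proof lives: the selector actions are per-\emph{term} actions $(i,\alpha,b)$ carrying a hide-and-seek payoff structure (payoff $2$ for a unique term, $1$ for a collision with consistent bits, $0$ for a collision with inconsistent bits), and explicit invasion constructions (moving mass from one bit to the other, swapping the probabilities of two terms) force any ESS to put probability exactly $1/M$ on one bit per term; on top of that, a dedicated action $\gamma$, rewarded $T+1$ precisely when it witnesses two players with inconsistent bits, invades any candidate whose bit assignment is inconsistent across terms sharing a variable. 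You would need this gadgetry (or a substitute adapted to your per-variable-pair encoding), the corresponding invasion arguments, and also the analysis of invaders that mix across all blocks simultaneously (the paper decomposes $\tau=\delta_1\tau_1+\delta_2\tau_2+\delta_3\tau_3$ and compares block by block). None of this is supplied, and it is not routine detail to be filled in later --- it is the actual content of the theorem.
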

\begin{proof}
  We prove our result by giving a reduction from the \emph{complement}
  of the problem $\forallD\HOMFOURFEAS(\Simplex)$ to $\ExistsESS$. It
  follows from Theorem~\ref{THM:forallHOM4FEAS} that the former
  problem is complete for $\existsDOP\coETR$. Thus let $a_{i,\alpha}$
  be given rational coefficients, with $i=0,\dots,n$ and
  $\alpha\in [m]^4$, forming the polynomials $F(y,z)$ and $F_i(z)$,
  for $i=0,\dots,n$ as in
  Definition~\ref{DEF:forallDHOMFOURFEASSimplex}.  We may assume that
  for all $y\in\zo^n$ and all $z\in \RR^m$ it holds that
  $F(y,z)\geq 0$. We may also without loss of generality assume that
  each $F_i$ is \emph{symmetrized} by
  Lemma~\ref{LEM:Symmetrization}. This will ensure that the game
  defined below is well defeined and symmetric.

  We next define a 5-player game $\calG$ based on $F$. The strategy
  set is naturally divided in three parts $S = S_1 \cup S_2 \cup
  S_3$. These are defined as follows.
\begin{equation}
  \begin{aligned}
  S_1 & = \{(i,\alpha,b) \mid i \in \{0,\dots,n\}, ~\alpha \in [m]^4,~b
  \in \zo\}\\
  S_2 & = \{\gamma\}\\
  S_3 & = \{1,\dots,m\}
\end{aligned}
\end{equation}
An action $(i,\alpha,b)$ of $S_1$ thus identifies a term of $F_i$
together with $b \in \zo$, which is supposed to be equal to
$y_i$. When convenient we may describe the actions of $S_1$ by pairs
$(t,b)$, where $t=(i,\alpha)$ for some $i$ and $\alpha$.  The single
action $\gamma$ is used for rewarding inconsistencies in the choices
of $b$ among strategies of $S_1$. Finally, a probability
distribution on $S_3$ will define an input $z$. Let $M=(n+1)m^4$ be
the total number of terms of $F$. Thus $\Abs{S_1}=2M$.

We shall \emph{duplicate} all actions of $S_2 \cup S_3$ and let
duplicates behave exactly the same regarding the utility function
defined below. By Lemma~\ref{LEM:NoDuplicate} it then follows that any
ESS $\sigma$ of $\calG$ must have $\support(\sigma) \subseteq
S_1$. For simplicity we describe the utilities of $\calG$ without the
duplicated actions.

When all players are playing an action of $S_1$ we define
\begin{equation}
  u((t_1,b_1),\dots,(t_5,b_5)) =
  \begin{cases} 2 & \text{if } t_1 \notin \{t_2,\dots,t_5\}\\
    1 & \text{if } t_1 \in \{t_2,\dots,t_5\} \text{ and } t_1 = t_j \Rightarrow b_1 = b_j\\
    0 & \text{otherwise}
  \end{cases} \enspace.
\end{equation}

Before defining the remaining utilities, we consider the payoff of
strategies that play uniformly on the set of terms and according to a
fixed assignment $y$. Define the number $T$ by
\begin{equation}
%  T = 2-\frac{4}{M}+\frac{6}{M^2}-\frac{4}{M^3}+\frac{1}{M^4} \enspace .
  T = 1 + \left(1-\frac{1}{M}\right)^4 \enspace .
\end{equation}

\begin{lemma}
  \label{LEM:Sigma-y}
  Let $y \in \zo^n$, let $y_0\in \zo$ be arbitrary, and define $\sigma_y$ to be the
  strategy that plays $(i,\alpha,y_i)$ with
  probability~$\frac{1}{M}$ for all $\alpha$, and the remaining
  strategies with probability~0. Then $u(\sigma_y^5)=T$.
\end{lemma}
\begin{proof}
  Note that $u(\sigma_y^5)-1$ is precisely the probability of the
  intersection of the events $t_1\neq t_j$, where $j=2,\dots,5$, and $t_j$
  is the term chosen by player~$j$. For fixed $t_1$, these events are
  independent and each occurs with probability~$1-\frac{1}{M}$. We thus have
  \[
    u(\sigma_y^5) = 1 + \Pr\left[\bigwedge_{j=2}^4 t_1 \neq t_j\right] = 1 +  \left(1-\frac{1}{M}\right)^4 = T \enspace .
  \]

  % Note that $2-u(\sigma_y^5)$ is precisely the probability of the
  % union of the events $t_1=t_j$, where $j=2,\dots,5$, and $t_j$ is the
  % term chosen by player~$j$. For fixed $t_1$, these events are
  % independent and each occurs with probability~$\frac{1}{M}$. By the
  % principle of inclusion-exclusion we thus have
  % \[
  %   2-u(\sigma_y^5) = \frac{4}{M}-\frac{6}{M^2}+\frac{4}{M^3}-\frac{1}{M^4} \enspace .
  % \]
\end{proof}

We will construct the game $\calG$ in such a way that any ESS $\sigma$
will have $u(\sigma^5)=T$. Making use of
Lemma~\ref{LEM:Sigma-y}, we now define utilities when at
least one player is playing the action~$\gamma$. In case at least two
players are playing~$\gamma$, these players receive utility~$0$ while
the remaining players receive utility~$T$. In case exactly one player
is playing~$\gamma$, the player receives utility~$T+1$ in case there
are two players that play actions $(i,\alpha,b)$ and $(i,\alpha',b')$
with $b\neq b'$; otherwise the player receives utility~$T$. In either
case, when exactly one player is playing~$\gamma$, the remaining
players receive utility~$T$.

We finally define utilities when one player is playing an action from
$S_1$ and the remaining four players are playing an action from
$S_3$. Suppose for simplicity of notation that player $j$ is playing
action $\beta_j \in S_3$, for $j=1,\dots,4$, while player~5 is playing
action $(i,\alpha,b)$. We let player~5 receive utility~$T$. Suppose
that there exists a permutation $\pi$ on~$[4]$ such that
$\beta = \pi \cdot \alpha$. Then, let
$K_\alpha = \abs{\{\pi \cdot \alpha : \pi \in \Sym([4])\}}$ (i.e.\ the
size of the orbit $\Sym([4]) \cdot \alpha$ of $\alpha$ with respect to
the defined group action for the group of permutations on~$[4]$). If
either (i)~$i=0$, or (ii)~ $i>0$, and $b=1$, the first four players
receive utility~$T-\frac{M}{K_\alpha} a_{i,\alpha}$; otherwise they receive
utility~$T$. The first four players also receive utility~$T$ in case
$\beta \neq \pi \cdot \alpha$ for all $\pi$.

The above definition is well defined, since we assumed that each $F_i$
is symmetrized. We observe the following relationship between $\calG$
and $F$.
\begin{lemma}
  \label{LEM:Relationship-G-F}
  Let $y\in \{0,1\}^n$, let $\sigma_y$ be defined as in
    Lemma~\ref{LEM:Sigma-y}, and let
    $z \in \Simplex^{m-1} =\Delta(S_3)$. Then
    $u(z;z^3,\sigma_y)=T-F(y,z)$.
\end{lemma}
\begin{proof}
  Using the above definitions we have
  \[
    \begin{split}
      T-u(z;z^3,\sigma_y) & = \sum_{i=0}^{n}\sum_{\alpha \in [m]^4} \sum_{\beta \in [m]^4} \left(T-u(\beta_1,\beta_2,\beta_3,\beta_4,(i,\alpha,y_i))\right) \frac{1}{M} \prod_{j=1}^4 z_{\beta_j} \\
      & = \sum_{\alpha \in [m]^4} \sum_{\beta \in \Sym([4]) \cdot \alpha} \frac{M}{K_\alpha}\left(a_{0,\alpha} + \sum_{i=1}^n y_i a_{i,\alpha}\right) \frac{1}{M} \prod_{j=1}^4 z_{\beta_j} \\
      & =  \sum_{\alpha \in [m]^4} \left(a_{0,\alpha} + \sum_{i=1}^n y_i a_{i,\alpha}\right)  \prod_{j=1}^4 z_{\alpha_j} \\
      & = F(y,z)
    \end{split}
  \]
\end{proof}

At this point we have only partially specified the utilities of the
game $\calG$; we simply let all remaining unspecified utilities
equal~$T$, thereby completing the definition of $\calG$.

We are now ready to prove that $\calG$ has an ESS if and only if there
exists $y\in\zo^n$ such that $F(y,z)>0$ for all
$z \in \Simplex^{m-1}$. Suppose first that $y \in \zo^n$ exists such
that $F(y,z)>0$ for all $z \in \Simplex^{m-1}$. We define
$\sigma=\sigma_y$ as in Lemma~\ref{LEM:Sigma-y} and show that any
$\tau\neq \sigma$ satisfies the conditions of Lemma~\ref{LEM:ESSdef}
thereby proving that $\sigma$ is an ESS of $\calG$. Suppose that
$\tau \neq \sigma$ invades $\sigma$. Consider first playing $\tau$
against $\sigma^4$. From the proof of Lemma~\ref{LEM:Sigma-y} it
follows that playing a strategy of the form $(i,\alpha,b)$ against
$\sigma^4$ gives payoff~$T$ if $b=y_i$ and otherwise payoff strictly
below~$T$. The strategies of $S_2 \cup S_3$ all give payoff~$T$
against $\sigma^4$. It follows that to invade $\sigma$, $\tau$ can
only play strategies from $S_1$ contained in $\support(\sigma)$. Let
us write $\tau = \delta_1\tau_1+\delta_2\tau_2+\delta_3\tau_3$ as a
convex combination of strategies $\tau_j$ with
$\support(\tau_j)\subseteq S_j$, for $j=1,2,3$. We shall consider
playing $\tau$ against $(\tau,\sigma^3)$ and argue that
$\tau_1=\sigma$ if $\delta_1>0$ and that $\delta_2=0$. Note first that
if a strategy of $S_3$ is played, all players receive utility~$T$, so
we may focus on the case when all players play using strategies from
$S_1 \cup S_2$. Suppose that $\delta_1>0$ and for a term
$t=(i,\alpha)$ of $F$ let $p_{t} = \Pr_{\tau_1}[(t,y_i)]$. We now have
\[
    u(\tau_1;\tau_1,\sigma^3) = 1 + \sum_t p_t(1-p_t)\left(1-\frac{1}{M}\right)^3
    = 1 + \left(1-\frac{1}{M}\right)^3 \sum_t p_t(1-p_t) \enspace .
\]
% \[
%     u(\tau_1;\tau_1,\sigma^3) = 1 + \sum_t p_t(1-p_t)\left(1-\frac{1}{M}\right)^3
%     = 1 + \left(1-\frac{1}{M}\right)^3\left[ 1 - \sum_t p_t^2 \right]
% \]
% Using the principle of
% inclusion-exclusion we have
% \[
%   \begin{split}
%     2-u(\tau_1;\tau_1,\sigma^3) &= \sum_t p_t\left[\frac{3}{M}-\frac{3}{M^2}+\frac{1}{M^3} + p_t\left(1-\frac{3}{M}+\frac{3}{M^2}-\frac{1}{M^3}\right)\right] \\
%     &= \frac{3}{M}-\frac{3}{M^2}+\frac{1}{M^3} + \left(1-\frac{3}{M}+\frac{3}{M^2}-\frac{1}{M^3}\right)\sum_t p_t^2 \enspace .
%   \end{split}
% \]
% By Jensen's inequality,
% $\sum_t p_t^2 \geq M\left(\sum_t p_t/M\right)^2 = \frac{1}{M}$, or equivalently, $1-\sum_t p_t^2 \leq 1 - \frac{1}{M}$, with
% equality if and only if $p_t=\frac{1}{M}$ for all $t$. This means that
% $u(\tau_1;\tau_1,\sigma^3) \leq T$, with equality if and only if
% $p_t=\frac{1}{M}$ for all $t$.
and by Chebyshev's sum inequality it follows that
\[
\sum_t p_t(1-p_t) \leq
\frac{1}{M}\left(\sum_t{p_t}\right)\left(\sum_t{(1-p_t)}\right)
= 1-\frac{1}{M} \enspace ,
\]
and that equality  holds if and only if
$p_t=\frac{1}{M}$ for all~$t$. Observe also that
\[
  u(\sigma;\tau_1,\sigma^3)= 1 + \sum_t \frac{1}{M}(1-p_t)\left(1-\frac{1}{M}\right)^3 = 1 + \frac{1}{M}(M-1)\left(1-\frac{1}{M}\right)^3 = T\enspace .
\]
Thus if $\tau_1\neq \sigma$, it follows that
$u(\tau_1;\tau_1,\sigma^3)<u(\sigma;\tau_1,\sigma^3)=T$.  Now, since
$\support(\tau_1)\subseteq \support(\sigma)$ when $\delta_1>0$,
playing $\gamma$ can give utility at most~$T$, but also gives
utility~$0$ in case another player plays~$\gamma$ as well.

Combining these observations it follows that unless $\delta_2=0$ and
that $\tau_1=\sigma$ when $\delta_1>0$ we have
$u(\sigma;\tau,\sigma^3)>u(\tau;\tau,\sigma^3)$.  Thus we may now
assume that this is the case, i.e., that
$\tau=\delta_1 \sigma + \delta_3 \tau_3$. From the definition of
$\calG$ we now have that
$u(\tau;\tau^j,\sigma^{4-j}) \leq u(\sigma;\tau^j,\sigma^{4-j})= T$,
for $j=1,2,3$. For $\tau$ to invade $\sigma$ it is thus required that
$u(\tau; \tau^4) \geq u(\sigma; \tau^4)$, and it follows from the
definition of $\calG$ that this is equivalent to
$u(\tau_3; \tau_3^3,\sigma)\geq T$. Now
$\tau_3 \in \Delta(S_3) = \Simplex^{m-1}$ and by assumption we have
$F(y,\tau_3)>0$. Furthermore we have
$u(\tau_3; \tau_3^3,\sigma) = T - F(y,\tau_3)$ and thus
$u(\tau_3; \tau_3^3,\sigma) < T$, which means $\sigma$ is actually ES
against $\tau$.

Suppose now on the other hand that $\sigma$ is an ESS of
$\calG$. First, since we duplicated the actions of $S_2 \cup S_3$, it
follows from Lemma~\ref{LEM:NoDuplicate} that
$\support(\sigma)\subseteq S_1$. We next show that for all terms $t$,
if $\Pr_{\sigma}[(t,b)]>0$, then unless $Pr_{\sigma}[(t,1-b)]=0$, $\sigma$ can be
invaded. Suppose that $t$ is a term of $F$, let $p_0=\Pr_{\sigma}[(t,0)]$ and
$p_1=\Pr_{\sigma}[(t,1)]$, and suppose that $p_0>0$ and $p_1>0$. Suppose
without loss of generality that $p_0\geq p_1$. Note now that
\[
  u((t,0);\sigma^4) - u((t,1);\sigma^4) = (1-p_1)^4-(1-p_0)^4 \geq 0 \enspace ,
\]
which can be seen by noting that that the left hand side of the
equality does not change when replacing all utilities of~2 by~1. Similarly
\[
  u((t,0);(t,0),\sigma^3) - u((t,1);(t,1),\sigma^3) =
  (1-p_1)^3-(1-p_0)^3 \geq 0 \enspace .
\]
Define the strategy $\sigma'$ from $\sigma$ by playing the
strategy~$(t,0)$ with probability~$p=p_0+p_1$, the strategy $(t,1)$
with probability~$0$, and otherwise according to $\sigma$. Then
\[
  \begin{split}
  u(\sigma';\sigma^4)-u(\sigma^5) &= (p_0+p_1)u((t,0);\sigma^4)-p_0u((t,0);\sigma^4)-p_1u((t,1);\sigma^4) \\ &= p_1(u((t,0);\sigma^4) - u((t,0);\sigma^4)) \geq 0 \enspace .
\end{split}
\]
By definition, $u((t,0);(t,1), \sigma_3) = u((t,1);(t,0), \sigma_3) = 0$, and we thus have
\[
\begin{split}
u&(\sigma';\sigma', \sigma^3) - u(\sigma;\sigma', \sigma^3) \\ &= (p_0+p_1)^2u((t,0);(t,0), \sigma^3) - p_0(p_0+p_1)u((t,0);(t,0), \sigma^3)\\
&=(p_0p_1+p_1^2)u((t,0);(t,1), \sigma^3) > 0 .
\end{split}
\]
which means that $\sigma'$ invades $\sigma$. Since $\sigma$ is an ESS,
this means that for each term $t$ there is $b_t\in \zo$ such that
$\sigma$ plays $(t,1-b_t)$ with probability~$0$. Let
$p_t=\Pr_{\sigma}[(t,b_t)]$ for all $t$. Defining the function
$h : \RR \rightarrow \RR$ by $h(p) = (1-p)^4$ we now have
\[
  u(\sigma^5) = 1+\sum_t p_t h(p_t) \enspace .
\]
Suppose there exists terms $t$ and $t'$ such that $p_t <
p_{t'}$. Since $h$ is strictly decreasing on $[0,1]$ we then then have
$h(p_t) > h(p_{t'})$, and therefore
$p_t h(p_t) + p_{t'} h(p_{t'}) < p_{t'} h(p_t) + p_t h(p_{t'})$.
Define $\sigma'$ to play $t$ with probability $p_{t'}$, $t'$ with
probability $p_t$, and otherwise according to $\sigma$. We then have
\[
  u(\sigma^5)) - u(\sigma';\sigma^4)) =  p_t(h(p_t)-h(p_{t'}))+p_{t'}(h(p_{t'})-h(p_t)) < 0 \enspace ,
\]
which means that $\sigma'$ invades $\sigma$. Since $\sigma$ is an ESS
this means that $p_t=\frac{1}{M}$ for all $t$. From the proof of
Lemma~\ref{LEM:Sigma-y} it then follows that
$u(\sigma^5) = T$.

Suppose now that there exists $i \in [n]$ and $\alpha,\alpha'$ such that
$b_{(i,\alpha)} \neq b_{(i,\alpha')}$. But then
$u(\gamma;\sigma^4)>T = u(\sigma^5)$, which means that $\gamma$
invades $\sigma$. Since $\sigma$ is an ESS there must exist $y\in\zo^n$ (and some $y_0\in\zo$) such that
$\sigma=\sigma_y$, using the notation of
Lemma~\ref{LEM:Sigma-y}.

Finally, let $z \in \Simplex^{m-1}=\Delta(S_3)$. By definition of $u$ we have
$u(z;z^j,\sigma^{4-j})=T=u(\sigma;z^j,\sigma^{4-j})$, for all
$j\in\{0,1,2\}$. Next
$u(z;z^3,\sigma)=T-F(y,z)$ while we have $u(\sigma;z^3,\sigma)=T$. For
$\sigma$ to be ES against $z$ we must thus have $F(y,z)>0$, and this
concludes the proof.
\end{proof}

The best upper bound on the complexity of $\ExistsESS$ we know is
membership of $\cExistsForallR$ which easily follows from either
Definition~\ref{DEF:ESS} or from the characterization of
Lemma~\ref{LEM:ESSdef}. For the simpler problem $\IsESS$ of
determining whether a given strategy is an ESS we can fully
characterize its complexity.
\begin{theorem}
\label{THM:IsESSCompleteness}
  $\IsESS$ is $\coETR$-complete for 5-player games.
\end{theorem}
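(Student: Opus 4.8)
The plan is to establish both membership in $\coETR$ and $\coETR$-hardness. Membership is immediate from the characterization of Lemma~\ref{LEM:ESSdef}. For a fixed rational strategy $\sigma$, the assertion that $\sigma$ is an ESS is equivalent to: for every $\tau \in \Delta(S)$, either $\tau = \sigma$, or $\sigma$ is ES against $\tau$. By Lemma~\ref{LEM:ESSdef} the latter is a finite disjunction over $j \in \{0,\dots,m-1\}$ of conjunctions of the equalities $u(\sigma;\tau^i,\sigma^{m-1-i}) = u(\tau;\tau^i,\sigma^{m-1-i})$ for $i<j$ together with the strict inequality $u(\sigma;\tau^j,\sigma^{m-1-j}) > u(\tau;\tau^j,\sigma^{m-1-j})$. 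Each such payoff is a polynomial in the entries of $\tau$ with rational coefficients, and $m=5$ is a fixed constant, so the membership of $\tau$ in the simplex together with the whole condition becomes a single universally quantified first-order sentence over the reals. Hence $\IsESS \in \coETR$.

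For hardness I would reduce from the complement of $\HOMFOURFEAS(\Simplex)$, which by Theorem~\ref{THM:HOM4FEAS} is $\coETR$-complete and remains so under the promise that $F(z)\geq 0$ for all $z$; under this promise the complement asks precisely whether $F(z) > 0$ for all $z \in \Simplex^{m-1}$. The crucial point is that this is exactly the condition analysed in the proof of Theorem~\ref{THM:ExistsESSHardness} in the special case $n=0$, i.e.\ when $F$ carries no $y$-variables. I would therefore run that construction with $F_0 = F$ and no further $F_i$ to obtain a $5$-player game $\calG$, and take as the candidate strategy the single $\sigma = \sigma_y$ of Lemma~\ref{LEM:Sigma-y} for the empty assignment (fixing, say, $y_0 = 0$). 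Because the candidate strategy is now fixed, the apparatus of Theorem~\ref{THM:ExistsESSHardness} that forces an unknown ESS into the form $\sigma_y$ — the bit component $b$ together with the action $\gamma$, and the duplication of $S_2 \cup S_3$ — is no longer needed, and the game simplifies to the strategy set $S_1 \cup S_3$; but it is cleanest simply to reuse the existing construction.

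It then remains to verify that $\sigma$ is an ESS if and only if $F(z) > 0$ for all $z \in \Simplex^{m-1}$, and both directions are already contained in the proof of Theorem~\ref{THM:ExistsESSHardness}. For the forward implication, if $F(z)>0$ on the simplex then the argument of that proof, specialised to $n=0$, shows that $\sigma_y$ is ES against every $\tau \neq \sigma_y$ and hence is an ESS. For the converse, suppose $\sigma = \sigma_y$ is an ESS; then in particular it is ES against every $z \in \Delta(S_3) = \Simplex^{m-1}$, and since such $z$ has support disjoint from $\support(\sigma) \subseteq S_1$ we have $z \neq \sigma$. The final paragraph of the proof of Theorem~\ref{THM:ExistsESSHardness} gives $u(z;z^j,\sigma^{4-j}) = T = u(\sigma;z^j,\sigma^{4-j})$ for $j \in \{0,1,2\}$, while $u(z;z^3,\sigma) = T - F(z)$ and $u(\sigma;z^3,\sigma) = T$ by Lemma~\ref{LEM:Relationship-G-F} with $n=0$; so by Lemma~\ref{LEM:ESSdef} being ES against $z$ forces $F(z) > 0$. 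As this holds for every $z \in \Simplex^{m-1}$, the reduction is complete, and together with membership we obtain $\coETR$-completeness.

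The main obstacle is not in the construction, which is inherited wholesale from Theorem~\ref{THM:ExistsESSHardness}, but in confirming that fixing the candidate strategy genuinely permits discarding the machinery that was present only to pin down the shape of an unknown ESS, and that neither direction of the equivalence secretly relied on it. In fact the backward direction is easier here than in Theorem~\ref{THM:ExistsESSHardness}, since we need not re-derive that an ESS must equal some $\sigma_y$ — it is given — and only the $S_3$-interaction encoding $F$ is used.
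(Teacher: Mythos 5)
Your proposal is correct and matches the paper's proof essentially verbatim: membership via Lemma~\ref{LEM:ESSdef}, then a reduction from the complement of $\HOMFOURFEAS(\Simplex)$ (using the non-negativity promise of Theorem~\ref{THM:HOM4FEAS}) that instantiates the game of Theorem~\ref{THM:ExistsESSHardness} with $n=0$ and fixes $\sigma$ as the uniform distribution on the actions $(0,\alpha,0)$. The extra details you supply — that only the final $S_3$-interaction step of that proof is needed in the backward direction once $\sigma$ is given — are exactly what the paper's phrase ``it then follows from the proof of Theorem~\ref{THM:ExistsESSHardness}'' silently invokes.
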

\begin{proof}
  Clearly $\IsESS$ belongs to $\coETR$ by the characterization of
  Lemma~\ref{LEM:ESSdef}. To show $\coETR$-hardness we reduce from the
  \emph{complement} of the problem $\HOMFOURFEAS(\Simplex)$ to
  $\IsESS$.  It follows from Theorem~\ref{THM:HOM4FEAS} that the
  former problem is complete for $\coETR$. From $F$ we construct the
  game $\calG$ as in the proof of Theorem~\ref{THM:ExistsESSHardness}
  letting $n=0$. We let $\sigma$ be the uniform distribution on the
  set of actions $(0,\alpha,0)$, where $\alpha \in [m]^4$. It then
  follows from the proof of Theorem~\ref{THM:ExistsESSHardness} that
  $\sigma$ is an ESS of $\calG$ if and only if $F(z)>0$ for all
  $z \in \Simplex^{m-1}$. Since we may assume that $F(z)\geq 0$ for
  all $z \in \RR^m$ this completes the proof.
\end{proof}

\section{Complexity of LSS}
\label{SEC:LSS}

In this section we extend our results for deciding existence of an ESS
to that of deciding existence of a LSS. The results are obtained by
reusing the reduction from the complement of
$\forallD\HOMFOURFEAS(\Simplex)$ to $\ExistsESS$ as a reduction to
$\ExistsLSS$. Since any LSS is also an ESS it will suffice to prove
that if the game constructed has an ESS it also has a LSS. While the
proofs of this section thus subsumes parts of the proof of
Theorem~\ref{THM:ExistsESSHardness} (and
Theorem~\ref{THM:IsESSCompleteness}), we presented those separately,
since the proofs the proofs of this section are more involved.

We state our second main result below.
\begin{theorem}
  \label{THM:ExistsLSSHardness}
  $\ExistsLSS$ is $\existsDOP\coETR$-hard for 5-player games.
\end{theorem}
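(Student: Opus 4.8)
The plan is to reuse the game $\calG$ constructed in the proof of Theorem~\ref{THM:ExistsESSHardness} verbatim, together with the same reduction from the complement of $\forallD\HOMFOURFEAS(\Simplex)$. Since every LSS is in particular an ESS (an LSS is just an ESS with a uniform invasion barrier), the ``only if'' direction is already handled by Theorem~\ref{THM:ExistsESSHardness}: if $\calG$ has an LSS then it has an ESS, and hence by that proof there exists $y \in \zo^n$ with $F(y,z)>0$ for all $z \in \Simplex^{m-1}$. The entire burden of the proof is therefore the converse: I must show that whenever the $\existsDOP\coETR$-instance is a yes-instance, the strategy $\sigma = \sigma_y$ exhibited in Theorem~\ref{THM:ExistsESSHardness} is not merely an ESS but admits a \emph{uniform} invasion barrier, i.e.\ is an LSS.

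Concretely, I would fix $y$ with $F(y,z)>0$ on the simplex and set $\sigma=\sigma_y$, and then produce a single $\eps_\sigma>0$ that works against all $\tau$ with $0<\Norm{\sigma-\tau}_1<\eps_\sigma$, using the LSS characterization $u(\sigma;\tau^{m-1})>u(\tau^m)$ from Definition~\ref{DEF:LocallySuperior}. The natural route is to decompose $\tau=\delta_1\tau_1+\delta_2\tau_2+\delta_3\tau_3$ exactly as in the ESS proof and bound the payoff gap $u(\sigma;\tau^4)-u(\tau^5)$ from below by a quantity that is uniformly positive once $\tau$ is close to (but distinct from) $\sigma$. For the components $\tau_2$ (playing $\gamma$) and the ``wrong-$b$'' or non-uniform parts of $\tau_1$, the gap in the ESS proof came from strict inequalities among low-degree polynomials (e.g.\ from Chebyshev's sum inequality and strict monotonicity of $h(p)=(1-p)^4$); here I need to re-derive those bounds quantitatively so that the deficit is bounded below by a constant times the deviation, rather than merely being positive. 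For the $\tau_3$ component lying in $\Delta(S_3)$, the crucial gap is $F(y,\tau_3)>0$, and since $F(y,\cdot)$ is a continuous function on the compact simplex it attains a positive minimum $\mu>0$; this is precisely the source of a \emph{uniform} barrier on that coordinate.

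The main obstacle I anticipate is handling the interaction between the components as $\tau$ varies, in particular controlling mixed terms where $\sigma$ is played against profiles blending $\tau_1$, $\tau_2$, and $\tau_3$. A clean way to organize this is to express $u(\sigma;\tau^4)-u(\tau^5)$ as a polynomial in $\eps=\Norm{\sigma-\tau}_1$ (or in the mixing weights) and identify its lowest-order nonvanishing term; the uniform barrier then follows from a lower bound on the coefficient of that leading term that is independent of the direction $\tau-\sigma$. Because $\calG$ has bounded degree (utilities are multilinear of degree~5 and the relevant payoffs are degree at most~4 in the deviation), the gap is a polynomial of bounded degree in $\eps$, and I expect the deviation $\tau-\sigma$ to fall into finitely many ``regimes'' (deviation first detected at the level of the support $S_2$ versus within $S_1$ versus into $S_3$) each of which can be bounded separately. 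The delicate point is ensuring that the constant from the compactness argument for $F(y,\cdot)$ and the combinatorial constants from the $S_1$-analysis combine into a single $\eps_\sigma$ valid for \emph{all} directions simultaneously; I would therefore treat the worst case over the finitely many regimes and take $\eps_\sigma$ to be the minimum of the resulting thresholds.
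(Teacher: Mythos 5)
Your proposal matches the paper's proof essentially step for step: the same game and reduction, the LSS$\Rightarrow$ESS observation disposing of one direction, and for the converse the same decomposition $\tau=\delta_1\tau_1+\delta_2\tau_2+\delta_3\tau_3$ with quantitative lower bounds on $u(\sigma;\tau^4)-u(\tau^5)$ (sharpened versions of the ESS inequalities for the $\tau_1$ and $\tau_2$ parts, compactness of the simplex for the $\tau_3$ part). The one detail you leave implicit is that compactness must be applied not just to $F(y,\cdot)$ at $\sigma$ but to $G(\tau_1)=\max_{\tau_3} u(\tau_3;\tau_3^3,\tau_1)$ over a neighborhood of $\sigma$ (the paper's Lemma~\ref{LEM:Utility-in-neighborhood}), which is precisely the interaction issue you flag as the main obstacle, so your plan is sound as stated.
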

Like for the problem $\ExistsESS$ the best upper bound on the
complexity of $\ExistsLSS$ we know is membership of $\cExistsForallR$
which follows directly from
Definition~\ref{DEF:LocallySuperior}.

The proof of Theorem~\ref{THM:ExistsLSSHardness} gives as a special
case a reduction from the complement of $\HOMFOURFEAS(\Simplex)$ to
the problem $\IsLSS$, analogously to the proof of
Theorem~\ref{THM:IsESSCompleteness}, thereby showing that $\IsLSS$ is
$\coETR$-hard. On the other hand, proving $\coETR$-membership of
$\IsLSS$ is not as simple as for $\IsESS$. This is because
Definition~\ref{DEF:LocallySuperior} defining that $\sigma$ is LSS
involves a leading existential quatifier in front of the universal
quantification over other strategies $\tau$, and we have no
alternative definition without such a leading existential quantifier,
unlike the case of ESS where this was given by
Lemma~\ref{LEM:ESSdef}. The existential quantifier is however just
used for expressing universal quatification over sufficiently close
strategies $\tau$ to $\sigma$, and it follow by a general result of
B{\"u}rgisser and Cucker~\cite[Theorem~9.2]{FCM:BurgisserC2009} that this
can be done in $\coETR$.
\begin{lemma}[B{\"u}rgisser and Cucker]
  \label{LEM:EliminateForallSmall}
  The problem of deciding if a sentence of the form
  \[
    \exists \epsilon_0 > 0\ \forall \eps \in (0,\eps_0)\ \forall x :
    \varphi(\eps,x)
  \]
  is true, for a given a quantifier-free formula over the reals
  $\varphi(\eps,x)$ belongs to~$\coETR$.
\end{lemma}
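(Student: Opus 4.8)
The plan is to show that the \emph{negation} of the given sentence lies in $\cETR$; since $\coETR$ consists exactly of the complements of $\cETR$ languages, this places the sentence itself in $\coETR$. Writing $\theta(\eps)$ for the inner predicate $\forall x:\varphi(\eps,x)$, the sentence asserts that $\theta$ holds on an entire right-neighbourhood $(0,\eps_0)$ of $0$. Let $S=\{\eps>0 : \exists x\ \neg\varphi(\eps,x)\}$ be the ``bad'' set; by Tarski--Seidenberg $S$ is semialgebraic, being the projection onto the $\eps$-axis of the semialgebraic set $\{(\eps,x):\eps>0\wedge\neg\varphi(\eps,x)\}$.

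First I would reduce the sentence to a statement purely about $S$. A semialgebraic subset of $\RR$ is a finite union of points and open intervals, so its finitely many isolated points cannot accumulate at $0$. Consequently $0\in\overline{S}$ forces some interval component of $S$ to have left endpoint $0$, whence $S\supseteq(0,\delta)$ for some $\delta>0$; conversely $S\supseteq(0,\delta)$ trivially gives $0\in\overline S$. Thus the given sentence is true if and only if $0\notin\overline{S}$, equivalently $\inf S>0$ (with the convention $\inf\emptyset=+\infty$). This is where the \emph{one-dimensionality of $\eps$} is essential: it is precisely the finiteness of the number of connected components that collapses ``eventually $\theta$'' into the single germ-triviality condition $0\notin\overline S$.

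It remains to certify $0\notin\overline S$, i.e. triviality of the germ at $0^{+}$, by a universal first-order sentence over $\RR$ of size polynomial in the input. Conceptually this is an \emph{infinitesimal transfer}: the sentence holds over $\RR$ exactly when $\forall x:\varphi(\hat\eps,x)$ holds in the real closed field $\RR\langle\hat\eps\rangle$ of algebraic Puiseux series with $\hat\eps$ a positive infinitesimal, since the truth value of a semialgebraic predicate ``for all small $\eps>0$'' coincides with its value at the generic infinitesimal. Here I would invoke the general result of B{\"u}rgisser and Cucker~\cite[Theorem~9.2]{FCM:BurgisserC2009}, which performs exactly this elimination of a leading ``for all sufficiently small $\eps$'' quantifier while keeping the matrix universal and the resulting formula of polynomial size, thereby yielding membership in $\coETR$.

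The main obstacle is this last step. The naive first-order manipulations only shuffle the quantifier within the hierarchy: negating $0\notin\overline S$ and unfolding $S$ gives $\exists\delta>0\,\forall\eps\in(0,\delta)\,\exists x:\neg\varphi(\eps,x)$, which is still $\Sigma_2$-shaped, and the symmetric rewritings merely trade one alternation for another. What is genuinely needed is the \emph{quantitative} real-algebraic content---controlling the degree and number of parameters of the Puiseux arc that realises the accumulation, so that the threshold quantifier can be absorbed with only polynomial blow-up---and it is exactly this that the cited theorem supplies. I would therefore treat that theorem as the black box carrying the technical weight of the lemma, while the paragraphs above record the elementary reduction of our $\exists\forall\forall$ sentence to the germ-triviality instance it handles.
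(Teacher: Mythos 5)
Your proposal is correct and takes essentially the same route as the paper: the paper offers no independent proof of this lemma, attributing it entirely to \cite[Theorem~9.2]{FCM:BurgisserC2009}, which is exactly the black box your argument ultimately invokes for the polynomial-size elimination of the ``for all sufficiently small $\eps$'' quantifier. Your surrounding reduction (semialgebraicity of the bad set $S$, finiteness of its components in $\RR$, and the equivalence of the sentence with $0 \notin \overline{S}$) is sound conceptual scaffolding, but all of the technical weight rests on the same citation the paper uses.
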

With this in hand, membership of $\IsLSS$ in $\coETR$ is straightforward and
combined with the result of Theorem~\ref{THM:ExistsLSSHardness}, we
obtain the following.
\begin{theorem}
\label{THM:IsLSSCompleteness}
  $\IsLSS$ is $\coETR$-complete for 5-player games.
\end{theorem}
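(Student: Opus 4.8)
The plan is to prove the two directions of completeness separately, recalling the hardness already set up by Theorem~\ref{THM:ExistsLSSHardness} and devoting the bulk of the work to the upper bound. For $\coETR$-hardness I would specialize the reduction underlying Theorem~\ref{THM:ExistsLSSHardness} to the case $n=0$, exactly mirroring the passage from Theorem~\ref{THM:ExistsESSHardness} to Theorem~\ref{THM:IsESSCompleteness}. Given a homogeneous degree-4 polynomial $F$ on $\Simplex^{m-1}$ I would build the game $\calG$ as in that reduction and take $\sigma$ to be the uniform distribution on the actions $(0,\alpha,0)$, $\alpha\in[m]^4$. The reduction for Theorem~\ref{THM:ExistsLSSHardness} shows that the constructed ESS carries a uniform invasion barrier, so for this particular $\sigma$ being an LSS is equivalent to being an ESS, and hence to $F(z)>0$ for all $z\in\Simplex^{m-1}$. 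Since $\HOMFOURFEAS(\Simplex)$ is $\cETR$-complete by Theorem~\ref{THM:HOM4FEAS}, its complement is $\coETR$-complete, and this gives a valid reduction establishing $\coETR$-hardness of $\IsLSS$.

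For the upper bound the goal is to place $\IsLSS$ in $\coETR$, and here Definition~\ref{DEF:LocallySuperior} already supplies almost the right shape: $\sigma$ is an LSS exactly when there exists $\eps_\sigma>0$ with $u(\sigma;\tau^{m-1})>u(\tau^m)$ for every strategy $\tau$ satisfying $0<\Norm{\sigma-\tau}_1<\eps_\sigma$. The plan is to cast this as a sentence of the form treated by Lemma~\ref{LEM:EliminateForallSmall}, that is $\exists\eps_0>0\ \forall\eps\in(0,\eps_0)\ \forall\tau:\varphi(\eps,\tau)$. I would introduce the distance as an explicit scalar parameter and let $\varphi(\eps,\tau)$ be the implication that, if $\tau\in\Delta(S)$ and $\Norm{\sigma-\tau}_1=\eps$, then $u(\sigma;\tau^{m-1})>u(\tau^m)$. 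As $\eps$ ranges over $(0,\eps_0)$ and $\tau$ over the $L^1$-sphere of radius $\eps$, the pair $(\eps,\tau)$ sweeps out precisely the punctured neighborhood $0<\Norm{\sigma-\tau}_1<\eps_0$; conversely any such $\tau$ realizes $\eps:=\Norm{\sigma-\tau}_1\in(0,\eps_0)$, so the rewritten sentence is equivalent to the LSS condition. Because $u(\sigma;\tau^{m-1})$ and $u(\tau^m)$ are polynomials in the entries of $\tau$ and the simplex constraints $\tau\geq0$, $\sum_k\tau_k=1$ are polynomial, $\varphi$ is quantifier-free over the reals, so Lemma~\ref{LEM:EliminateForallSmall} applies and yields membership in $\coETR$.

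The one step I expect to require care is purely syntactic: expressing $\Norm{\sigma-\tau}_1=\eps$ as a quantifier-free polynomial formula, since the $L^1$ norm involves absolute values. I would resolve this by the standard sign-splitting device, writing $\Norm{\sigma-\tau}_1=\eps$ as the finite disjunction, over sign vectors $s\in\{-1,1\}^{\Abs{S}}$, of the conjunction $(\bigwedge_k s_k(\sigma_k-\tau_k)\geq0)\wedge(\sum_k s_k(\sigma_k-\tau_k)=\eps)$, each disjunct being a genuine polynomial condition. (Equivalently one may replace the $L^1$ ball by any convenient semialgebraic neighborhood basis of $\sigma$, since whether $\sigma$ is locally superior does not depend on the choice of norm.) With the neighborhood condition so encoded, $\varphi$ is a quantifier-free Boolean combination of polynomial equalities and inequalities, the hypotheses of Lemma~\ref{LEM:EliminateForallSmall} are met, and $\IsLSS\in\coETR$ follows. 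Combining this with the hardness of the first paragraph gives $\coETR$-completeness of $\IsLSS$ for 5-player games.
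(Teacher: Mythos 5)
Your proposal follows the paper's own route exactly: $\coETR$-hardness by specializing the reduction behind Theorem~\ref{THM:ExistsLSSHardness} to $n=0$, mirroring how Theorem~\ref{THM:IsESSCompleteness} specializes Theorem~\ref{THM:ExistsESSHardness} (and your chain $F>0$ on $\Simplex^{m-1}$ $\Rightarrow$ $\sigma$ is an LSS $\Rightarrow$ $\sigma$ is an ESS $\Rightarrow$ $F>0$ on $\Simplex^{m-1}$ is sound), and membership by rewriting Definition~\ref{DEF:LocallySuperior} into the quantifier shape $\exists\eps_0>0\ \forall\eps\in(0,\eps_0)\ \forall\tau$ handled by Lemma~\ref{LEM:EliminateForallSmall}, which is precisely the paper's use of the B{\"u}rgisser--Cucker result.

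The one genuine defect is in your primary device for the step you yourself flagged as delicate. The sign-splitting encoding of $\Norm{\sigma-\tau}_1=\eps$ is a disjunction over all sign vectors $s\in\{-1,1\}^{\Abs{S}}$, hence has $2^{\Abs{S}}$ disjuncts; the resulting formula is exponentially large in the input and cannot be written down by a polynomial-time reduction, which is what membership in $\coETR=\BP(\coNP^0_\RR)$ requires. Your parenthetical remark is therefore not an optional equivalent but the necessary fix: since local superiority does not depend on the choice of norm, replace the sphere condition by a single polynomial equation such as $\sum_k(\sigma_k-\tau_k)^2=\eps$; alternatively, keep the $L^1$ norm but introduce auxiliary variables $w_k$ constrained by $w_k\geq 0$, $w_k^2=(\sigma_k-\tau_k)^2$, and $\sum_k w_k=\eps$, which pins $w_k=\abs{\sigma_k-\tau_k}$ and lets the extra variables be absorbed into the universally quantified block $\forall x$ of Lemma~\ref{LEM:EliminateForallSmall}. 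With either repair the formula has polynomial size and your argument goes through as stated.
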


The remainder of this section is concerned with the proof of
Theorem~\ref{THM:ExistsLSSHardness}. We thus consider the polynomial
$F(y,z)$ as given in Definition~\ref{DEF:forallDHOMFOURFEASSimplex}
and assume as before that $F(y,z)\geq 0$ for all $y\in\zo^n$ and all
$z\in \RR^m$, and that each $F_i$ is symmetrized.

As explained above, to complete the proof of
Theorem~\ref{THM:ExistsLSSHardness} we just need to show that if there
exist $y \in \zo^n$ exists such that $F(y,z)>0$ for all
$z \in \Simplex^{m-1}$, then the game $\calG$ defined in the proof of
Theorem~\ref{THM:ExistsESSHardness} has an LSS. We thus assume that
$y \in \zo^n$ exists such that $F(y,z)>0$ for all
$z \in \Simplex^{m-1}$, and define $\sigma=\sigma_y$ as in
Lemma~\ref{LEM:Sigma-y}, with $y_0\in \zo$ arbitrarily chosen.

For $\eps>0$ to be specified later, consider a strategy
$\tau \neq \sigma$ such that $\norm{\sigma-\tau}_\infty<\eps$. We are
to prove that $u(\sigma;\tau^4) > u(\tau^5)$. Let us write
$\tau = \delta_1\tau_1+\delta_2\tau_2+\delta_3\tau_3$ as a convex
combination of strategies $\tau_j$ with
$\support(\tau_j)\subseteq S_j$, for $j=1,2,3$.  Since
$\support(\sigma) \subseteq S_1$ we have
\[
  \norm{\sigma-\tau}_\infty = \max\left(\norm{\sigma-\delta_1\tau_1}_\infty,\norm{\delta_2\tau_2}_\infty,\norm{\delta_3\tau_3}_\infty\right) \leq \eps \enspace .
\]
Since $\abs{S_2}=2$ and $\abs{S_3}=2m$ (recall that the actions of
$S_2 \cup S_3$ are duplicated), it follows that
$\norm{\tau_2}_\infty \geq \frac{1}{2}$ and
$\norm{\tau_3}_\infty \geq \frac{1}{2m}$. Combining this with the
inequalities $\norm{\delta_2\tau_2}_\infty \leq \eps$ and
$\norm{\delta_3\tau_3}_\infty \leq \eps$, it follows that
$\delta_2 \leq 2\eps$ and $\delta_3 \leq 2m\eps$. Since also
$\norm{\sigma-\delta_1\tau_1}_\infty \leq \eps$ we now have
\begin{equation}
\label{EQ:sigma-tau1-distance}
  \norm{\sigma-\tau_1}_\infty \leq \norm{\sigma-\delta_1\tau_1}_\infty + (1-\delta_1) \leq \eps + \delta_2 + \delta_3 \leq (2m+3)\eps \leq 4m\eps \enspace ,
\end{equation}
assuming, without loss of generality, $m\geq 2$ for the last
inequality.

It will be useful to introduce notation for the probabilities of the
strategy $\tau_1$.
\begin{definition}
  \label{DEF:tau1-probabilities}
  For a term of the form $t=(\alpha,i)$, define $b_t = y_i$. Next,
  let $p_{t,0}=\Pr_{\tau_1}[(t,b_t)]$,
$p_{t,1}=\Pr_{\tau_1}[(t,1-b_t)]$, and $p_t = p_{t,0}+p_{t,1}$.
\end{definition}
We also introduce notation for the set of actions in $S_1$ that are
inconsistent with $y$.
\begin{definition}
  \label{DEF:Bad-set-B}
  $B=\{(t,b) \in S_1 \mid b\neq b_t\}$.
\end{definition}

In order to prove that $u(\sigma;\tau^4) > u(\tau^5)$ we shall analyze
$u(\tau_j;\tau^4)$, for $j \in \{1,2,3\}$, separately, and compare to
$u(\sigma;\tau^4)$. Note that
$u(\sigma;\tau^4) = \delta_1^4 u(\sigma;\tau_1^4) + (1-\delta_1^4)T$.

\subsection{Comparison of $u(\sigma;\tau^4)$ to $u(\tau_1;\tau^4)$.}
Note first that
$u(\tau_1;\tau^4) = \delta_1^4 u(\tau_1;\tau_1^4) + (1-\delta_1^4)T$,
so it suffices to consider $u(\tau_1;\tau_1^4)$.

\begin{definition}
  For a fixed term $t$ and for $(t_j,b_j)$ chosen according to
  $\tau_1$ for $j=2,\dots,5$, we define events $A_t$, $B_{t,0}$, $B_{t,1}$ as
  follows. $A_t$ denotes the set of outcomes where
  $t \notin \{t_2,\dots,t_5\}$. $B_{t,0}$ denotes the set of outcomes
where $t \in \{t_2,\dots,t_5\}$, and $b_j=b_t$ whenever
$t_j=t$. Finally, $B_{t,1}$ denotes the set of outcomes where
$t \in \{t_2,\dots,t_5\}$, and $b_j=1-b_t$ whenever $t_j=t$.
\end{definition}
It is straightforward to  compute the probability of these events.
\begin{lemma}
\label{LEM:Events-At-Bt0-Bt1}
  For a fixed $t$ we have $\Pr_{\tau_1^4}[A_t] = (1-p_t)^4$,
  $\Pr_{\tau_1^4}[B_{t,0}] = (1-p_{t,1})^4-(1-p_t)^4$, and
  $\Pr_{\tau_1^4}[B_{t,1}] = (1-p_{t,0})^4-(1-p_t)^4$.
\end{lemma}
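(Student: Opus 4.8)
The plan is to treat the four choices $(t_j,b_j)$, $j=2,\dots,5$, as independent samples drawn from $\tau_1$ and to classify each individual sample by how it interacts with the fixed term $t$. By Definition~\ref{DEF:tau1-probabilities}, a single sample lands on the consistent action $(t,b_t)$ with probability $p_{t,0}$, on the inconsistent action $(t,1-b_t)$ with probability $p_{t,1}$, and on an action belonging to some term other than $t$ with probability $1-p_t$; these three outcomes are mutually exclusive and exhaustive, and the four samples are independent.

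For $A_t$ I would argue directly that $t\notin\{t_2,\dots,t_5\}$ is exactly the event that each of the four samples avoids term $t$, which by independence has probability $(1-p_t)^4$. For $B_{t,0}$ the key is to rewrite the event as a nested set difference. The requirement that $b_j=b_t$ whenever $t_j=t$ is equivalent to the statement that no sample uses the inconsistent action $(t,1-b_t)$, and combined with $t\in\{t_2,\dots,t_5\}$ this forces at least one sample to use $(t,b_t)$. Hence $B_{t,0}$ is the event ``no sample uses $(t,1-b_t)$'', of probability $(1-p_{t,1})^4$, with the sub-event ``no sample uses $(t,1-b_t)$ and no sample uses $(t,b_t)$'' removed; the latter is precisely the event that all four samples avoid term $t$, of probability $(1-p_t)^4$. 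Since this sub-event is contained in the former, subtracting yields $\Pr_{\tau_1^4}[B_{t,0}]=(1-p_{t,1})^4-(1-p_t)^4$.

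The computation for $B_{t,1}$ is identical after interchanging the roles of the two bits: it is the event ``no sample uses $(t,b_t)$'', of probability $(1-p_{t,0})^4$, with the event that all four samples avoid $t$ removed, giving $(1-p_{t,0})^4-(1-p_t)^4$. There is no genuine obstacle here, as the lemma is a routine independence calculation; the only point demanding care is the correct reading of the conditions defining $B_{t,0}$ and $B_{t,1}$, namely that combining ``every sample on term $t$ carries a prescribed bit'' with ``at least one sample is on term $t$'' is exactly what lets each probability be written as a nested difference, so that the terms telescope cleanly rather than forcing an inclusion--exclusion over the four samples individually.
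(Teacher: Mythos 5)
Your proof is correct, and it fills in exactly the routine independence calculation the paper leaves implicit (the paper states this lemma without proof, remarking only that it is straightforward): classify each of the four i.i.d.\ samples from $\tau_1$ as hitting $(t,b_t)$, hitting $(t,1-b_t)$, or missing term $t$, and write $B_{t,0}$, $B_{t,1}$ as nested set differences of ``avoid the forbidden action'' events. Nothing is missing; the key observation that $A_t$ is contained in each of the larger events, so the probabilities subtract cleanly, is handled correctly.
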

Using these we can express the payoffs of the two strategies $\sigma$ and
$\tau_1$ against $\tau_1^4$ in terms of the probabilities $p_{t,0}$ and $p_{t,1}$.
\begin{lemma}
  \label{LEM:payoffs-sigma-tau1}
  The payoffs $u(\sigma;\tau_1^4)$ and $u(\tau_1^5)$ satisfy
  the following equations.
\begin{align*}
  u(\sigma;\tau_1^4) &= \sum_t \frac{1}{M} \left((1-p_t)^4+(1-p_{t,1})^4\right)\\
  u(\tau_1^5) &= \sum_t p_t(1-p_t)^4 +  p_{t,0}(1-p_{t,1})^4 + p_{t,1}(1-p_{t,0})^4
\end{align*}
\end{lemma}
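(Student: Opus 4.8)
The goal is to verify the two formulas in Lemma~\ref{LEM:payoffs-sigma-tau1} by directly computing the payoffs from the definition of the game $\calG$. The plan is to condition on the events $A_t$, $B_{t,0}$, and $B_{t,1}$ from the preceding definition, whose probabilities are already given by Lemma~\ref{LEM:Events-At-Bt0-Bt1}. Recall that when all five players play actions of $S_1$, the utility of player~1 playing $(t_1,b_1)$ is $2$ if $t_1\notin\{t_2,\dots,t_5\}$, is $1$ if $t_1$ appears among the other terms with all matching bits, and is $0$ otherwise. Since $\support(\tau_1)\subseteq S_1$, every relevant outcome falls under this case.

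**Computing $u(\sigma;\tau_1^4)$.** Here player~1 plays $\sigma$, which places mass $\frac{1}{M}$ on each action $(t,b_t)$. First I would fix the term $t$ played by player~1 (contributing the factor $\frac{1}{M}$) and note that player~1's bit is the consistent bit $b_t$. Then I would split on the opponents' outcome: on $A_t$ (term $t$ absent among opponents) player~1 scores $2$; on $B_{t,0}$ (term $t$ present with all opponent bits equal to $b_t$) player~1 scores $1$, since its own bit $b_t$ matches; on $B_{t,1}$ (term $t$ present but with opposing bit $1-b_t$) player~1 scores $0$. Substituting the probabilities from Lemma~\ref{LEM:Events-At-Bt0-Bt1} gives, for each $t$, a contribution $\frac{1}{M}\bigl(2(1-p_t)^4 + \bigl((1-p_{t,1})^4-(1-p_t)^4\bigr)\bigr) = \frac{1}{M}\bigl((1-p_t)^4+(1-p_{t,1})^4\bigr)$, and summing over $t$ yields the stated first formula.

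**Computing $u(\tau_1^5)$.** For the second formula player~1 also plays $\tau_1$, so I would now sum over player~1's own action rather than use the uniform weights. For a fixed term $t$, player~1 plays $(t,b_t)$ with probability $p_{t,0}$ and $(t,1-b_t)$ with probability $p_{t,1}$. When player~1 plays the consistent action $(t,b_t)$: on $A_t$ the payoff is $2$, on $B_{t,0}$ it is $1$, on $B_{t,1}$ it is $0$, mirroring the previous computation. When player~1 plays the inconsistent action $(t,1-b_t)$, the roles of $B_{t,0}$ and $B_{t,1}$ swap: the matching event for player~1 is now $B_{t,1}$ (where opponents also carry bit $1-b_t$), so the payoff is $2$ on $A_t$, $1$ on $B_{t,1}$, and $0$ on $B_{t,0}$. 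Assembling these and again inserting the probabilities from Lemma~\ref{LEM:Events-At-Bt0-Bt1}, the contribution of term $t$ is
\[
  p_{t,0}\bigl(2(1-p_t)^4 + (1-p_{t,1})^4-(1-p_t)^4\bigr) + p_{t,1}\bigl(2(1-p_t)^4 + (1-p_{t,0})^4-(1-p_t)^4\bigr),
\]
which simplifies, using $p_{t,0}+p_{t,1}=p_t$, to $p_t(1-p_t)^4 + p_{t,0}(1-p_{t,1})^4 + p_{t,1}(1-p_{t,0})^4$. Summing over $t$ gives the second formula.

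**Main obstacle.** The computation is essentially bookkeeping, so the only real care needed is the asymmetry in the second formula: one must track correctly which of $B_{t,0}$ and $B_{t,1}$ is the bit-matching event depending on whether player~1 plays the consistent or inconsistent action, and confirm that the outcome space for the opponents' draws is exactly partitioned by $A_t$, $B_{t,0}$, $B_{t,1}$ once the opponents' term-$t$ bits are resolved. Verifying that these three events are indeed disjoint and exhaustive for the purpose of scoring player~1 (i.e.\ that the $0$-payoff case is precisely captured by the remaining event) is the step where an off-by-one or a miscounted bit pattern could creep in, and is worth checking explicitly before collapsing the sum.
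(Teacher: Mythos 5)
Your proof is correct and takes essentially the same route as the paper's: both expand $u(\sigma;\tau_1^4)$ and $u(\tau_1^5)$ over the events $A_t$, $B_{t,0}$, $B_{t,1}$ with payoff weights $2$, $1$, $0$ (swapping the roles of $B_{t,0}$ and $B_{t,1}$ when player~1 plays the inconsistent bit) and then substitute the probabilities from Lemma~\ref{LEM:Events-At-Bt0-Bt1}. As for your closing worry: the three events are disjoint but \emph{not} exhaustive (opponents may hold term $t$ with mixed bits), yet this is harmless, since player~1's payoff is $0$ on that residual event regardless of which bit player~1 carries, so only $\Pr[A_t]$ and the relevant matching $B$-event contribute to the expectation.
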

\begin{proof}
  By the definition of $u$ we have
\begin{align}
  u(\sigma;\tau_1^4) &= \sum_t \frac{1}{M} \left(2\Pr[A_t]+\Pr[B_{t,0}]\right)\\
  u(\tau_1^5) &= \sum_t 2 p_t \Pr[A_t] + p_{t,0}\Pr[B_{t,0}] + p_{t,1}\Pr[B_{t,1}]
\end{align}
The statement then follows from Lemma~\ref{LEM:Events-At-Bt0-Bt1}
\end{proof}
We start by relating the first terms in the two summations in
Lemma~\ref{LEM:payoffs-sigma-tau1}.
\begin{lemma}
  \label{LEM:tau1-bound1}
  % The inequality
  % \[
  %   \sum_t \frac{1}{M} (1-p_t)^4 \geq \sum_t p_t (1-p_t)^4 \enspace ,
  % \]
  % is satisfied and holds with equality if and only if
  % $p_t=\frac{1}{M}$ for all~$t$.

  We have the inequality
  \[
    \sum_t \frac{1}{M} (1-p_t)^4 \geq \left(1-\frac{1}{M}\right)^4 \enspace ,
  \]
  and when $p_t \leq \frac{2}{5}$ for all $t$ we furthermore have
  \[
    \left(1-\frac{1}{M}\right)^4  \geq \sum_t p_t (1-p_t)^4 \enspace .
  \]
  Both inequalities holds with equality if and only if
  $p_t=\frac{1}{M}$ for all~$t$.
\end{lemma}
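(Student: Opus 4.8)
The plan is to recognize both inequalities as consequences of Jensen's inequality, exploiting that $\tau_1$ is a probability distribution supported on $S_1$. Since every element of $S_1$ is a pair $(t,b)$ with $t$ ranging over the $M=(n+1)m^4$ terms and $b \in \zo$, the quantities $p_t = p_{t,0}+p_{t,1}$ are nonnegative and satisfy $\sum_t p_t = 1$; as there are exactly $M$ terms, their average is $\frac{1}{M}$. First I would establish the lower bound. Setting $g(p) = (1-p)^4$, one has $g''(p) = 12(1-p)^2 \ge 0$, so $g$ is convex and lies above its tangent line $L_g$ at $p = \frac{1}{M}$. Summing $g(p_t) \ge L_g(p_t)$ over all $M$ terms, the linear correction $g'(\frac{1}{M})\sum_t(p_t - \frac{1}{M})$ vanishes because $\sum_t p_t = 1$, leaving $\sum_t g(p_t) \ge M\, g(\frac{1}{M}) = M(1-\frac{1}{M})^4$; dividing by $M$ gives the first claim.

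For the upper bound I would instead work with $f(p) = p(1-p)^4$. A direct computation gives $f'(p) = (1-p)^3(1-5p)$ and $f''(p) = -4(1-p)^2(2-5p)$, so $f'' \le 0$ on $[0,\frac{2}{5}]$ and $f$ is concave there. Under the hypothesis $p_t \le \frac{2}{5}$ for all $t$, every $p_t$ lies in this interval, and so does their average $\frac{1}{M}$. Running the identical tangent-line argument for the concave $f$ at $\frac{1}{M}$ yields $\sum_t f(p_t) \le M\, f(\frac{1}{M})$, and since $M\, f(\frac{1}{M}) = M \cdot \frac{1}{M}(1-\frac{1}{M})^4 = (1-\frac{1}{M})^4$, this is exactly the second claim.

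The step I expect to require the most care is the ``if and only if'' for equality, because both $g''$ and $f''$ vanish at a boundary point ($p=1$ for $g$ and $p=\frac{2}{5}$ for $f$), so one cannot simply invoke strict convexity/concavity on the closed interval. For $g$ I would make the strictness explicit by factoring: with $q = 1-p$ and $q_0 = 1-\frac{1}{M}$, the gap $g(p) - L_g(p)$ equals $(q-q_0)^2(q^2 + 2q_0 q + 3q_0^2)$, and the quadratic factor has discriminant $-8q_0^2 < 0$ (as $q_0 \ne 0$ when $M \ge 2$), hence is strictly positive; thus the gap is strictly positive off $p = \frac{1}{M}$. For $f$ I would argue by monotonicity of the gap $G = f - L_f$: it satisfies $G(\frac{1}{M}) = G'(\frac{1}{M}) = 0$ while $G'' = f'' < 0$ on $[0,\frac{2}{5})$, so $G'$ is strictly decreasing and changes sign exactly at $\frac{1}{M}$, whence $G$ attains its unique maximum $0$ at $\frac{1}{M}$ and is strictly negative on $[0,\frac{2}{5}] \setminus \{\frac{1}{M}\}$. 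In both cases the termwise bound is strict unless $p_t = \frac{1}{M}$, so equality in the summed inequality holds precisely when $p_t = \frac{1}{M}$ for every $t$. Everything other than this boundary strictness analysis is routine once the average of the $p_t$ is pinned down as $\frac{1}{M}$.
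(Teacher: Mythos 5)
Your proof is correct and takes essentially the same route as the paper: the paper applies Jensen's inequality to the strictly convex function $(1-x)^4$ and to $x(1-x)^4$, which it shows (via the same second-derivative computation) is strictly concave on $[0,\frac{2}{5}]$, evaluated at the average $\frac{1}{M}$ of the $p_t$ --- your tangent-line argument is just Jensen's inequality with its proof inlined. Your extra boundary strictness analysis, while sound, is not actually needed: a function whose second derivative is nonnegative (resp.\ nonpositive) and vanishes only at isolated points is already strictly convex (resp.\ strictly concave) on the interval, so the standard equality condition in Jensen's inequality applies, which is what the paper invokes.
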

\begin{proof}
%   By Chebyshev's sum inequality it follows that
% \[
% \sum_t p_t(1-p_t)^4 \leq
% \frac{1}{M}\left(\sum_t{p_t}\right)\left(\sum_t{(1-p_t)^4}\right)
% = \sum_t{\frac{1}{M}(1-p_t)^4} \enspace ,
% \]
% and that equality holds if and only if
% $p_t=\frac{1}{M}$ for all~$t$.
  The function $(1-x)^4$ is strictly convex and thus Jensen's inequality gives
  \[
    \left(1-\frac{1}{M}\right)^4 = \left(1-\frac{\sum_t
        p_t}{M}\right)^4 \leq \frac{1}{M}\sum_t (1-p_t)^4 \enspace ,
  \]
  with equality if and only if $p_t=\frac{1}{M}$ for all~$t$,
  resulting in the first of the stated inequalities. Next, since
  $\frac{d^2}{dx} x(1-x)^4 = 4(1-x)^2(5x-2)$, the function $x(1-x)^4$
  is strictly concave in the interval $[0,\frac{2}{5}]$ and thus
  Jensen's inequality gives
  \[
    \frac{1}{M} \sum_t p_t (1-p_t)^4 \leq \left(\frac{\sum_t p_t}{M}\right)\left(1-\frac{\sum_t p_t}{M}\right)^4 = \frac{1}{M}\left(1-\frac{1}{M}\right)^4 \enspace ,
  \]
  with equality if and only if $p_t=\frac{1}{M}$ for all~$t$,
  resulting in the second of the stated inequalities
\end{proof}
Next we relate the remaining terms in the two summations in
Lemma~\ref{LEM:payoffs-sigma-tau1}.
\begin{lemma}
  \label{LEM:tau1-bound2}
  Suppose that $p_{t,0} \geq \frac{5}{6M}$ for all $t$. Then
  \[
\sum_t \frac{1}{M} (1-p_{t,1})^4 \geq \sum_t p_{t,0}(1-p_{t,1})^4 + p_{t,1}(1-p_{t,0})^4 +  \frac{1}{M} p_{t,1} \enspace .
  \]
\end{lemma}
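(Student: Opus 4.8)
The plan is to prove the stated inequality directly as an inequality between the two sums, not term by term; a purely term-by-term argument is doomed because $\frac{1}{M}-p_{t,0}$ can be negative when $p_{t,0}>\frac{1}{M}$, in which case the single summand is negative. Abbreviating $a_t=p_{t,0}$, $b_t=p_{t,1}$ and $c_t=\frac{1}{M}-a_t$, I would first move $\sum_t p_{t,0}(1-p_{t,1})^4$ to the left and rewrite the claim as $\sum_t c_t(1-b_t)^4 \ge \sum_t\bigl(b_t(1-a_t)^4+\frac{1}{M}b_t\bigr)$. The structural fact driving everything is the identity $\sum_t c_t=\sum_t b_t=:\beta$, which holds because $\tau_1\in\Delta(S_1)$ forces $\sum_t(a_t+b_t)=1$, whence $\sum_t a_t=1-\beta$ and $\sum_t c_t=1-(1-\beta)=\beta$. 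This is exactly what lets the coordinates with $a_t<\frac{1}{M}$ (positive $c_t$) pay for those with $a_t>\frac{1}{M}$.

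Next I would lower-bound the left-hand side. The elementary estimate $(1-b)^4\ge 1-4b$ (valid since $(1-b)^4-(1-4b)=b^2(6-4b+b^2)\ge 0$) handles the coordinates with $c_t\ge 0$, while $(1-b)^4\le 1$ handles those with $c_t<0$; together these give the uniform bound $c_t(1-b_t)^4\ge c_t-4c_t^{+}b_t$, where $c_t^{+}=\max(c_t,0)$. Summing and using the identity yields $\sum_t c_t(1-b_t)^4\ge \beta-4\sum_t c_t^{+}b_t$. Here the hypothesis enters for the first time: $a_t\ge\frac{5}{6M}$ gives $c_t\le\frac{1}{6M}$, so $\sum_t c_t^{+}b_t\le\frac{1}{6M}\sum_t b_t=\frac{\beta}{6M}$, and therefore $\sum_t c_t(1-b_t)^4\ge\beta\bigl(1-\frac{2}{3M}\bigr)$.

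For the right-hand side I would upper-bound $\sum_t b_t(1-a_t)^4$. Since $(1-a)^4$ is decreasing on $[0,1]$ and $a_t\ge\frac{5}{6M}$, each factor satisfies $(1-a_t)^4\le\bigl(1-\frac{5}{6M}\bigr)^4\le 1-\frac{5}{3M}$; setting $u=\frac{5}{6M}$, the last inequality is $(1-u)^4\le 1-2u$, i.e. $u\bigl(u^3-4u^2+6u-2\bigr)\le 0$, which holds comfortably because $u\le\frac{5}{96}$ (as $M=(n+1)m^4\ge 16$) makes $u^3-4u^2+6u-2<0$. Hence $\sum_t b_t(1-a_t)^4\le\beta\bigl(1-\frac{5}{3M}\bigr)$, so the entire right-hand side is at most $\beta\bigl(1-\frac{5}{3M}\bigr)+\frac{1}{M}\beta=\beta\bigl(1-\frac{2}{3M}\bigr)$, which coincides exactly with the lower bound from the previous step, closing the proof with difference $0$.

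The main obstacle is not any individual estimate but making the two one-sided bounds meet precisely at $\beta\bigl(1-\frac{2}{3M}\bigr)$: the constants $\frac{5}{6M}$, $\frac{1}{6M}$, $\frac{2}{3M}$, $\frac{5}{3M}$ are rigidly linked, and the single hypothesis $p_{t,0}\ge\frac{5}{6M}$ must be used in both directions — to cap $c_t$ from above and to cap $(1-a_t)^4$ from above. The delicate point is therefore choosing the Bernoulli/monotonicity linearizations tightly enough that the slack $\frac{1}{M}\beta$ deliberately built into the right-hand side is matched exactly, rather than being swamped by the error terms, as happens under any cruder bounding (for instance replacing every $(1-a_t)^4$ by $1$).
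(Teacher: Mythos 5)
Your proof is correct, and it takes a genuinely different route from the paper's. The paper argues in a single termwise chain: it sandwiches all three quartics with the linearization $1-4x \leq (1-x)^4 \leq 1-3x$ (the upper bound valid only for $0\leq x\leq\tfrac{1}{6}$), collapses the resulting sum using the same normalization you make explicit ($\sum_t (p_{t,0}+p_{t,1})=1$, hence $\sum_t(\tfrac{1}{M}-p_{t,0}-p_{t,1})=0$), and only at the last step invokes the hypothesis via $p_{t,1}\left(6p_{t,0}-\tfrac{4}{M}\right) \geq \tfrac{1}{M}p_{t,1}$. You instead prove two independent one-sided bounds that meet exactly at $\beta\left(1-\tfrac{2}{3M}\right)$: Bernoulli plus sign-splitting on $c_t=\tfrac{1}{M}-p_{t,0}$ for the left-hand side, and monotonicity of $(1-a)^4$ (rather than linearization) for the right-hand side, with the hypothesis used in both directions. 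What your version buys: the paper's upper linearization $(1-x)^4\leq 1-3x$ is applied to both $p_{t,1}$ and $p_{t,0}$, so its proof implicitly assumes $p_{t,1}\leq\tfrac{1}{6}$ and $p_{t,0}\leq\tfrac{1}{6}$ --- conditions absent from the lemma statement and supplied only by the context where the lemma is applied (Proposition~\ref{PROP:analysis-tau1} has $\norm{\sigma-\tau_1}_\infty\leq\tfrac{1}{6M}$, giving $p_{t,1}\leq\tfrac{1}{6M}$ and $p_{t,0}\leq\tfrac{7}{6M}$). Your argument needs no upper bound on the probabilities beyond $p_{t,0},p_{t,1}\in[0,1]$, so it establishes the lemma under exactly its stated hypothesis; the price is slightly longer two-sided bookkeeping. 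Two minor points you should make explicit: your step $\left(1-\tfrac{5}{6M}\right)^4\leq 1-\tfrac{5}{3M}$ invokes $M\geq 16$, which holds because $m\geq 2$ --- an assumption the paper makes without loss of generality just after Equation~(\ref{EQ:sigma-tau1-distance}); and, like the paper, you rely on $\tau_1\in\Delta(S_1)$ (Definition~\ref{DEF:tau1-probabilities}) for the normalization identity, which you do state and justify correctly.
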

\begin{proof}
  It is easy to verify that for all $0\leq x \leq \frac{1}{6}$ it
  holds that $1-4x \leq (1-x)^4 \leq 1-3x$.  Using this we get
  \[
    \begin{split}
    \sum_t & \frac{1}{M} (1-p_{t,1})^4 - p_{t,0}(1-p_{t,1})^4 -
    p_{t,1}(1-p_{t,0})^4  \\ \geq \sum_t & \frac{1}{M}(1-4p_{t,1}) - p_{t,0}(1-3p_{p,1}) - p_{p,1}(1-3p_{t,0}) \\
    = \sum_t & p_{t,1}\left(6p_{t,0} - \frac{4}{M}\right) \\
    \geq  \sum_t & \frac{1}{M}p_{t,1} \enspace .
  \end{split}
\]
\end{proof}
Combining these we obtain the following.
\begin{proposition}
\label{PROP:analysis-tau1}
  Assume that $\norm{\sigma-\tau_1}_\infty \leq \frac{1}{6M}$. Then
  \[
    u(\sigma;\tau_1^4) \geq u(\tau_1^5) + \frac{1}{M}\Pr_{\tau_1}[B] \enspace .
  \]
  Furthermore, $u(\sigma;\tau_1^4)=u(\tau_1^5)$ if and only if
  $\sigma=\tau_1$.
\end{proposition}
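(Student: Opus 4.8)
The plan is to assemble the three preceding lemmas, using the hypothesis $\norm{\sigma-\tau_1}_\infty \leq \frac{1}{6M}$ only to verify their side conditions. First I would unpack the distance bound. Since $\sigma=\sigma_y$ plays each consistent action $(t,b_t)$ with probability exactly $\frac{1}{M}$ and each inconsistent action $(t,1-b_t)$ with probability~$0$, the hypothesis forces, for every term $t$, the bounds $p_{t,0} \in [\frac{5}{6M},\frac{7}{6M}]$ and $p_{t,1}\in[0,\frac{1}{6M}]$ in the notation of Definition~\ref{DEF:tau1-probabilities}. In particular $p_{t,0}\geq\frac{5}{6M}$, which is exactly the hypothesis of Lemma~\ref{LEM:tau1-bound2}, and $p_t=p_{t,0}+p_{t,1}\leq\frac{4}{3M}\leq\frac{2}{5}$ (using $M\geq m^4\geq 16$ for $m\geq 2$), which is the side condition needed for the second inequality of Lemma~\ref{LEM:tau1-bound1}.

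Next I would expand $u(\sigma;\tau_1^4)-u(\tau_1^5)$ via the closed forms of Lemma~\ref{LEM:payoffs-sigma-tau1} and split it into two groups of terms. The first group, $\sum_t \frac{1}{M}(1-p_t)^4 - \sum_t p_t(1-p_t)^4$, is handled by chaining the two inequalities of Lemma~\ref{LEM:tau1-bound1} through their common value $\left(1-\frac{1}{M}\right)^4$: the difference equals $\bigl[\sum_t\frac{1}{M}(1-p_t)^4-(1-\frac{1}{M})^4\bigr]+\bigl[(1-\frac{1}{M})^4-\sum_t p_t(1-p_t)^4\bigr]\geq 0$, with equality in each bracket precisely when $p_t=\frac{1}{M}$ for all~$t$. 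The second group, $\sum_t \frac{1}{M}(1-p_{t,1})^4 - \sum_t \bigl[p_{t,0}(1-p_{t,1})^4+p_{t,1}(1-p_{t,0})^4\bigr]$, is bounded below by $\frac{1}{M}\sum_t p_{t,1}$ via Lemma~\ref{LEM:tau1-bound2}. Since $\sum_t p_{t,1}=\Pr_{\tau_1}[B]$ by Definition~\ref{DEF:Bad-set-B}, adding the two bounds yields the claimed inequality $u(\sigma;\tau_1^4)\geq u(\tau_1^5)+\frac{1}{M}\Pr_{\tau_1}[B]$.

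For the equality characterization I would argue both directions. If $\tau_1=\sigma$, then $p_{t,0}=\frac{1}{M}$ and $p_{t,1}=0$ for all~$t$, and substituting shows that each of the two groups vanishes, so $u(\sigma;\tau_1^4)=u(\tau_1^5)$. Conversely, because the first group is $\geq 0$ and the second is $\geq\frac{1}{M}\Pr_{\tau_1}[B]\geq 0$, the equality $u(\sigma;\tau_1^4)=u(\tau_1^5)$ forces both groups to equal zero; vanishing of the first yields $p_t=\frac{1}{M}$ for all~$t$ by the equality clause of Lemma~\ref{LEM:tau1-bound1}, while vanishing of the second together with its lower bound forces $\Pr_{\tau_1}[B]=\sum_t p_{t,1}=0$, hence $p_{t,1}=0$ and $p_{t,0}=\frac{1}{M}$, which is exactly $\tau_1=\sigma$. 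The proof is essentially pure bookkeeping once the three lemmas are in place; the only points that need care — and where I would be explicit — are verifying that the $\ell_\infty$-bound really delivers the side conditions $p_{t,0}\geq\frac{5}{6M}$ and $p_t\leq\frac{2}{5}$, and tracking the equality cases consistently across the chained inequalities of Lemma~\ref{LEM:tau1-bound1}. I do not anticipate a genuine obstacle beyond this.
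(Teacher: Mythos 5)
Your proof is correct and follows essentially the same route as the paper's: both assemble Lemma~\ref{LEM:payoffs-sigma-tau1} with the two inequalities of Lemma~\ref{LEM:tau1-bound1} and the bound of Lemma~\ref{LEM:tau1-bound2}, and derive the equality case from $\Pr_{\tau_1}[B]=0$ plus the equality clause of Lemma~\ref{LEM:tau1-bound1}. In fact you are more explicit than the paper, which leaves implicit the verification that $\norm{\sigma-\tau_1}_\infty \leq \frac{1}{6M}$ delivers the side conditions $p_{t,0}\geq\frac{5}{6M}$ and $p_t\leq\frac{2}{5}$ (the latter using the standing assumption $m\geq 2$, hence $M\geq 16$).
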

\begin{proof}
  The inequality follows by the equations for $u(\sigma;\tau_1^4)$ and
  $u(\tau_1^5)$ given in Lemma~\ref{LEM:payoffs-sigma-tau1} together
  with the inequalities of Lemma~\ref{LEM:tau1-bound1} and
  Lemma~\ref{LEM:tau1-bound2}. If $u(\sigma;\tau_1^4)=u(\tau_1^5)$ it
  follows that $\Pr_{\tau_1}[B]=0$ in which case
  Lemma~\ref{LEM:tau1-bound1} expresses the inequality
  $u(\sigma;\tau_1^4) \geq u(\tau_1^5)$ and states that it holds with
  equality if and only if $\sigma=\tau_1$.
\end{proof}
This finally allows us to compare playing the strategies $\sigma$ and
$\tau_1$ against $\tau^4$.
\begin{corollary}
\label{COR:analysis-tau1}
  Assume that $\norm{\sigma-\tau_1}_\infty \leq \frac{1}{6M}$. Then
  \[
    u(\sigma;\tau^4) \geq u(\tau_1;\tau^4) + \frac{\delta_1^4}{M}\Pr_{\tau_1}[B] \enspace .
  \]
  Furthermore, $u(\sigma;\tau^4)=u(\tau_1;\tau^4)$ if and only if
  $\sigma=\tau_1$.
\end{corollary}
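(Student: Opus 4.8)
The plan is to obtain the Corollary directly from Proposition~\ref{PROP:analysis-tau1} by unwinding the effect of the three opponents' deviations to $S_2\cup S_3$. Recall the two multilinear identities already recorded in this subsection: because a player whose own action lies in $S_1$ receives payoff exactly $T$ whenever at least one of the four opponents plays an action of $S_2\cup S_3$, we have
\[
  u(\sigma;\tau^4) = \delta_1^4\, u(\sigma;\tau_1^4) + (1-\delta_1^4)T \enspace ,
\]
and likewise, using $u(\tau_1;\tau_1^4)=u(\tau_1^5)$,
\[
  u(\tau_1;\tau^4) = \delta_1^4\, u(\tau_1^5) + (1-\delta_1^4)T \enspace .
\]
First I would subtract these: the common term $(1-\delta_1^4)T$ cancels, leaving the clean expression
\[
  u(\sigma;\tau^4) - u(\tau_1;\tau^4) = \delta_1^4\left(u(\sigma;\tau_1^4) - u(\tau_1^5)\right) \enspace .
\]

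Next I would invoke Proposition~\ref{PROP:analysis-tau1}, which is applicable precisely because the Corollary's hypothesis $\norm{\sigma-\tau_1}_\infty\leq\frac{1}{6M}$ is its hypothesis as well, to bound the parenthesised difference from below by $\frac{1}{M}\Pr_{\tau_1}[B]$. Since $\delta_1^4\geq 0$, multiplying the inequality through by $\delta_1^4$ preserves its direction and yields the stated inequality $u(\sigma;\tau^4) \geq u(\tau_1;\tau^4) + \frac{\delta_1^4}{M}\Pr_{\tau_1}[B]$ at once.

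For the equality characterisation I would argue as follows. Since $\tau_1$ is a genuine component of the convex decomposition of $\tau$, its normalisation requires $\delta_1>0$ (and in the ambient setting the earlier distance bounds even give $\delta_1\geq 1-(2m+2)\eps$). Hence $u(\sigma;\tau^4)=u(\tau_1;\tau^4)$ forces $u(\sigma;\tau_1^4)=u(\tau_1^5)$, and by the equality clause of Proposition~\ref{PROP:analysis-tau1} this occurs exactly when $\sigma=\tau_1$; the converse is immediate. I expect the only delicate point to be the justification of the two weighting identities, namely that a deviation of any opponent outside $S_1$ leaves the first player with payoff exactly $T$ — but this is built into the utilities of $\calG$ (the cases with a $\gamma$-player, the $S_1$-versus-$S_3$ block, and the catch-all value $T$ for all remaining profiles), so no fresh computation is needed; and the subsidiary point $\delta_1>0$, needed only for the equality direction, is already guaranteed by the well-definedness of $\tau_1$.
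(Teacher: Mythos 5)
Your proof is correct and follows essentially the same route as the paper: both rest on the two weighting identities $u(\sigma;\tau^4)=\delta_1^4\, u(\sigma;\tau_1^4)+(1-\delta_1^4)T$ and $u(\tau_1;\tau^4)=\delta_1^4\, u(\tau_1;\tau_1^4)+(1-\delta_1^4)T$, followed by an application of Proposition~\ref{PROP:analysis-tau1}. If anything, your explicit treatment of the equality clause (observing that $\delta_1>0$ is needed to cancel the factor $\delta_1^4$, and that this is guaranteed in the ambient setting) is slightly more careful than the paper's terse proof, which leaves that step implicit.
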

\begin{proof}
  The inequality follows from Proposition~\ref{PROP:analysis-tau1}
  together with the observations that
  $u(\sigma;\tau^4) = \delta_1^4 u(\sigma;\tau_1^4) + (1-\delta_1^4)T$
  and
  $u(\tau_1;\tau^4) = \delta_1^4 u(\tau_1;\tau_1^4) +
  (1-\delta_1^4)T$.
\end{proof}

\subsection{Comparison of $u(\sigma;\tau^4)$ to $u(\tau_2;\tau^4)$ and  $u(\tau_3;\tau^4)$.}
Let in the following $\widehat{\tau_1}$ denote the strategy obtained
from $\tau$ by conditioning on the outcome belonging to
$S_1 \setminus B$. We first consider playing
$\sigma$ against $\widehat{\tau_1}$.
\begin{lemma}
  \label{LEM:sigma-best-when-consistent}
  $u(\sigma;\widehat{\tau_1}^4) \geq T$.
\end{lemma}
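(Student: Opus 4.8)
The plan is to compute $u(\sigma;\widehat{\tau_1}^4)$ explicitly and then apply convexity, exactly mirroring the first inequality of Lemma~\ref{LEM:tau1-bound1}. First I would record the crucial structural feature of $\widehat{\tau_1}$: since it is obtained by conditioning on $S_1 \setminus B$, its support consists solely of the \emph{consistent} actions $(t,b_t)$, one for each of the $M$ terms $t$. Writing $q_t = \Pr_{\widehat{\tau_1}}[(t,b_t)]$, this is a genuine probability distribution, so $\sum_t q_t = 1$. The strategy $\sigma = \sigma_y$ is by definition uniform over these same $M$ consistent actions, playing each $(t,b_t)$ with probability $\frac{1}{M}$.

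The key observation is that when player~1 plays $\sigma$ and the remaining four players play $\widehat{\tau_1}$, \emph{all} five players play consistent actions, so the value-$0$ case in the definition of $u$ on $S_1$ can never arise: whenever player~1's term $t$ coincides with the term of another player, both play $(t,b_t)$ and hence agree on the second coordinate, so the implication $t_1 = t_j \Rightarrow b_1 = b_j$ holds automatically. Consequently, conditioned on player~1 choosing term $t$, the payoff is~$2$ if none of the four other players picks $t$ (which by independence happens with probability $(1-q_t)^4$) and is~$1$ otherwise. Averaging over player~1's uniform choice of term yields
\[
  u(\sigma;\widehat{\tau_1}^4) = \sum_t \frac{1}{M}\bigl(2(1-q_t)^4 + \bigl(1-(1-q_t)^4\bigr)\bigr) = 1 + \frac{1}{M}\sum_t (1-q_t)^4 \enspace .
\]

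To finish, I would invoke convexity of $x \mapsto (1-x)^4$: by Jensen's inequality, together with $\frac{1}{M}\sum_t q_t = \frac{1}{M}$, we get $\frac{1}{M}\sum_t (1-q_t)^4 \geq \bigl(1-\tfrac{1}{M}\bigr)^4$, whence $u(\sigma;\widehat{\tau_1}^4) \geq 1 + (1-\tfrac{1}{M})^4 = T$. This is precisely the computation already carried out for the first inequality of Lemma~\ref{LEM:tau1-bound1} (there applied to the weights $p_t$), so that step can simply be cited. I do not anticipate a genuine obstacle here; the only point requiring care is the bookkeeping that justifies eliminating the value-$0$ case, i.e.\ confirming that conditioning on $S_1 \setminus B$ really does restrict both strategies to mutually consistent actions so that the payoff reduces to the clean two-case form above. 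Everything else is the same convexity argument used earlier.
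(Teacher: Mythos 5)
Your proof is correct and is essentially the paper's own argument: both reduce to the identity $u(\sigma;\widehat{\tau_1}^4) = 1 + \frac{1}{M}\sum_t (1-q_t)^4$ (the paper obtains it by specializing Lemma~\ref{LEM:payoffs-sigma-tau1} to $\widehat{p}_{t,1}=0$, while you rederive it directly from the definition of $u$) and then conclude via the same Jensen's-inequality step as in the first inequality of Lemma~\ref{LEM:tau1-bound1}.
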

\begin{proof}
  Similarly to Definition~\ref{DEF:tau1-probabilities}, for a term $t$
  we let $\widehat{p}_{t,0}=\Pr_{\widehat{\tau_1}}[(t,b_t)]$,
  $\widehat{p}_{t,1}=\Pr_{\widehat{\tau_1}}[(t,1-b_t)]$, and
  $\widehat{p}_t = \widehat{p}_{t,0}+\widehat{p}_{t,1}$. Clearly
  $\widehat{p}_{t,1}=0$ and $\widehat{p}_t=\widehat{p}_{t,0}$.  Using
  $\widehat{\tau_1}$ in place of $\tau_1$ in
  Lemma~\ref{LEM:payoffs-sigma-tau1} and Lemma~\ref{LEM:tau1-bound1}
  we then have
  \[
    \begin{split}
      u(\sigma;\widehat{\tau_1}^4) & = \sum_t \frac{1}{M} \left((1-\widehat{p}_t)^4+(1-\widehat{p}_{t,1})^4\right) \\
      & = 1 + \sum_t \frac{1}{M} (1-\widehat{p}_t)^4 \geq 1 + \left(1-\frac{1}{M}\right)^4 = T \enspace .
  \end{split}
\]
\end{proof}
We first compare playing the strategies $\sigma$ and $\tau_2$ against
$\tau^4$.
\begin{proposition}
\label{PROP:analysis-tau2}
  We have
  \[
    u(\sigma;\tau^4) \geq u(\tau_2;\tau^4) - 12\delta_1\Pr_{\tau_1}[B] + \delta_2 \enspace .
  \]
\end{proposition}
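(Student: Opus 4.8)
The plan is to read $u(\tau_2;\tau^4)$ directly off the $\gamma$-utilities of $\calG$, bound the resulting inconsistency probability by the mass that $\tau_1$ places on the ``bad'' set $B$, and combine this with a lower bound on $u(\sigma;\tau^4)$. No smallness hypothesis on $\eps$ will be needed, which matches the statement being given unconditionally.

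First I would compute $u(\tau_2;\tau^4)$. Since $\tau_2$ is supported on $S_2$ and both actions of $S_2$ behave identically to $\gamma$, the focal player effectively plays $\gamma$. Each of the four opponents plays an action of $S_2$ with probability $\delta_2$, independently, so with probability $1-(1-\delta_2)^4$ at least one opponent also plays $\gamma$, and then the focal player receives $0$. Otherwise the focal player is the unique $\gamma$-player and receives $T$, plus a bonus of $1$ exactly when two opponents play actions $(i,\alpha,b),(i,\alpha',b')$ with $b\neq b'$. Writing $P$ for the probability that no opponent plays $\gamma$ and some such inconsistency occurs among the opponents, this yields
\[
  u(\tau_2;\tau^4) = T(1-\delta_2)^4 + P \enspace .
\]
An inconsistency forces two opponents to share a first coordinate $i$ while disagreeing on $b$; since the consistent value is $b_t=y_i$, at least one of these two opponents plays an action of $B$. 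Hence $P$ is at most the probability that some opponent plays in $B$, and as $B\subseteq S_1$ we have $\Pr_\tau[B]=\delta_1\Pr_{\tau_1}[B]$, so a union bound over the four opponents gives $P \leq 4\delta_1\Pr_{\tau_1}[B]$.

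Next I would lower-bound $u(\sigma;\tau^4)$. Using the identity $u(\sigma;\tau^4)=\delta_1^4 u(\sigma;\tau_1^4)+(1-\delta_1^4)T$ already recorded, it suffices to estimate $u(\sigma;\tau_1^4)$. Splitting the formula of Lemma~\ref{LEM:payoffs-sigma-tau1} into its two sums, the first obeys $\sum_t\frac1M(1-p_t)^4\geq(1-\frac1M)^4$ by Lemma~\ref{LEM:tau1-bound1}, while for the second the elementary inequality $(1-x)^4\geq1-4x$ together with $\sum_t p_{t,1}=\Pr_{\tau_1}[B]$ gives $\sum_t\frac1M(1-p_{t,1})^4\geq 1-\frac4M\Pr_{\tau_1}[B]$. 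Adding these and using $T=1+(1-\frac1M)^4$ yields $u(\sigma;\tau_1^4)\geq T-\frac4M\Pr_{\tau_1}[B]$, whence $u(\sigma;\tau^4)\geq T-4\delta_1\Pr_{\tau_1}[B]$ after bounding $\frac{\delta_1^4}{M}\leq\delta_1$.

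Finally I would combine the two estimates. Subtracting gives
\[
  u(\sigma;\tau^4)-u(\tau_2;\tau^4)\geq T\bigl(1-(1-\delta_2)^4\bigr)-8\delta_1\Pr_{\tau_1}[B] \enspace ,
\]
and since $1-(1-x)^4\geq x$ on $[0,1]$ (both sides agree at the endpoints $0$ and $1$ and the left-hand side is concave) and $T\geq1$, the first term is at least $\delta_2$. This proves the claimed inequality, in fact with the sharper constant $8$ in place of $12$. The only delicate step is correctly extracting $u(\tau_2;\tau^4)$ from the conditional $\gamma$-utilities and observing that every inconsistency is witnessed by an opponent playing in $B$; the remaining combination is routine arithmetic.
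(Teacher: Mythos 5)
Your proof is correct, but it follows a genuinely different route from the paper's. The paper partitions the opponents' profiles into four events ($C_1$: all four in $S_1\setminus B$; $C_2$: someone plays in $S_2$; $C_3$: no $S_2$ but some $S_3$; $C_4$: the rest), compares conditional payoffs event by event, and invokes Lemma~\ref{LEM:sigma-best-when-consistent} to get $u(\sigma;\cdot)\geq T$ on $C_1$, bounding crudely $\sigma\geq 0$ and $\tau_2\leq T+1$ on $C_4$; this gives $\delta_2 T-4\delta_1\Pr_{\tau_1}[B](T+1)$ and hence the constant $12$. You instead compute $u(\tau_2;\tau^4)$ exactly as $T(1-\delta_2)^4+P$ and lower-bound $u(\sigma;\tau^4)$ quantitatively by $T-4\delta_1\Pr_{\tau_1}[B]$, reusing Lemma~\ref{LEM:payoffs-sigma-tau1}, the first (unconditional) inequality of Lemma~\ref{LEM:tau1-bound1}, and $(1-x)^4\geq 1-4x$ --- ingredients the paper's proof of this proposition does not touch. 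Both arguments share the two essential ideas: the $\delta_2$ gain from $\gamma$-collisions and charging the $T+1$ bonus to $B$ by a union bound over the four opponents. Your version buys a sharper constant ($8$ rather than $12$), independence from Lemma~\ref{LEM:sigma-best-when-consistent}, and it silently repairs a small imprecision in the paper: the bonus can also occur on outcomes in the paper's $C_3$ (two inconsistent $S_1$-opponents alongside an $S_3$-player), where the paper asserts $\tau_2$ gets exactly $T$; this is harmless there because such outcomes also require an opponent in $B$, but your exact expression for $P$ covers them automatically. The paper's version buys an event decomposition that parallels the analysis of $\tau_3$ in Proposition~\ref{PROP:analysis-tau3} and avoids computations with the probabilities $p_{t,0},p_{t,1}$.
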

\begin{proof}
  We divide the outcomes of $\tau^4$ into events. Let $C_1$ be the
  event that the four players play an action in $S_1 \setminus B$,
  i.e.\ $C_1=(S_1\setminus B)^4$. Let $C_2$ be the event that at least
  one of the four players play an action in $S_2$. Let $C_3$ be the
  event that none of the four players play on action in $S_2$ and that
  at least one player is playing an action of $S_3$. Finally let $C_4$
  be the event consisting of all remaining outcomes.

  Conditioned on the event $C_1$, the strategy $\sigma$ receive payoff
  at least~$T$ against $\tau^4$ by
  Lemma~\ref{LEM:sigma-best-when-consistent}, whereas $\tau_2$
  receives payoff~$T$.  When $C_2$ occurs, which happens with
  probability at least $\delta_2$, the strategy $\sigma$ receives
  payoff~$T$ while the strategy $\tau_2$ receives payoff~$0$. When
  $C_3$ occurs, both strategies $\sigma$ and $\tau_2$ receives
  payoff~$T$. We finally consider the case of $C_4$ occurring. Note
  that for this to happen, at least one of the four players need to
  play an action of $B$, which in turn happens with probability at
  most $4\delta_1\Pr_{\tau_1}[B]$. By definition of $u$, the strategy
  $\tau_2$ receives at most payoff~$T+1$ while $\sigma$ receives at
  least $0$ in any event and in particular in the event
  $C_4$. Combining these observations we obtain
  \[
    u(\sigma;\tau^4) - u(\tau_2;\tau^4) \geq \delta_2T - 4\delta_1\Pr_{\tau_1}[B](T+1) \enspace ,
  \]
  from which the stated inequality follows by using also that
  $1 \leq T \leq 2$.
\end{proof}
We next compare playing the strategies
$\sigma$ and $\tau_3$ against $\tau^4$.
\begin{proposition}
  \label{PROP:analysis-tau3}
  We have
\[
  u(\sigma; \tau^4) \geq  u(\tau_3;\tau^4) + 4\delta_1\delta_3^3\left(T - u(\tau_3; \tau_3^3, \tau_1)\right) - 8\delta_1\Pr_{\tau_1}[B] \enspace .
\]
\end{proposition}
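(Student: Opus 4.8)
The plan is to expand $u(\sigma;\tau^4)-u(\tau_3;\tau^4)$ by conditioning on the \emph{composition} of the four opponents, namely on how many of them draw their action from each of $S_1$, $S_2$, $S_3$ when sampling from $\tau=\delta_1\tau_1+\delta_2\tau_2+\delta_3\tau_3$. Since neither $\sigma$ (supported on $S_1$) nor $\tau_3$ (supported on $S_3$) plays $\gamma$, whenever at least one opponent plays an action of $S_2$ both focal strategies receive the default payoff $T$, so all such outcomes cancel. It therefore suffices to analyse the outcomes in which every opponent plays in $S_1\cup S_3$, indexed by the number $k\in\{0,1,2,3,4\}$ of opponents playing in $S_1$ (the remaining $4-k$ playing in $S_3$), which occur with probability $\binom{4}{k}\delta_1^k\delta_3^{4-k}$.

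Next I would read off the focal payoffs case by case from the definition of $\calG$. With focal strategy $\sigma\in S_1$, the only non-default configurations are $k=0$, where the profile is one $S_1$-player against four $S_3$-players and the focal ($S_1$) player receives $T$, and $k=4$, where all five players are in $S_1$ and the focal payoff is $u(\sigma;\tau_1^4)$; for $k\in\{1,2,3\}$ the profile is a remaining configuration and $\sigma$ receives $T$. Symmetrically, with focal strategy $\tau_3\in S_3$ the only non-default configuration is $k=1$, where the profile is one $S_1$-player (playing $\tau_1$) against four $S_3$-players and the focal ($S_3$) player receives exactly the case~(3) payoff $u(\tau_3;\tau_3^3,\tau_1)$; every other $k$ yields the default $T$. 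Hence all outcomes cancel except $k=1$ and $k=4$, and weighting by the probabilities $4\delta_1\delta_3^3$ and $\delta_1^4$ yields the exact identity
\[
  u(\sigma;\tau^4)-u(\tau_3;\tau^4)=4\delta_1\delta_3^3\bigl(T-u(\tau_3;\tau_3^3,\tau_1)\bigr)+\delta_1^4\bigl(u(\sigma;\tau_1^4)-T\bigr) \enspace .
\]

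It then remains to control the error term $\delta_1^4\bigl(u(\sigma;\tau_1^4)-T\bigr)$, and this is the one step requiring more than bookkeeping. Writing $\beta=\Pr_{\tau_1}[B]$ and splitting $\tau_1=(1-\beta)\widehat{\tau_1}+\beta\mu$ into its consistent part $\widehat{\tau_1}$ (supported on $S_1\setminus B$) and its inconsistent part $\mu$ (supported on $B$), I would expand $u(\sigma;\tau_1^4)$ multilinearly over the four opponent slots. The term in which all four opponents play $\widehat{\tau_1}$ contributes $(1-\beta)^4 u(\sigma;\widehat{\tau_1}^4)\geq (1-\beta)^4 T$ by Lemma~\ref{LEM:sigma-best-when-consistent}, while every remaining term is nonnegative because all utilities of $\calG$ are nonnegative. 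Using $(1-\beta)^4\geq 1-4\beta$ and $T\leq 2$ this gives $u(\sigma;\tau_1^4)\geq (1-4\beta)T\geq T-8\beta$, hence $u(\sigma;\tau_1^4)-T\geq -8\Pr_{\tau_1}[B]$. Finally, since $\delta_1\leq 1$ we have $\delta_1^4\leq\delta_1$, so $\delta_1^4\bigl(u(\sigma;\tau_1^4)-T\bigr)\geq -8\delta_1\Pr_{\tau_1}[B]$; substituting into the identity above gives the stated inequality.

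The main obstacle is precisely this last bound: in the all-$S_1$ event the focal $\sigma$ may fare worse than the default $T$, and this loss must be shown to be controlled by $\Pr_{\tau_1}[B]$ rather than by the crude bound $T+1$ on utilities. The point is that a deficit only arises when opponents play inconsistent actions of $B$; the consistent-play Lemma~\ref{LEM:sigma-best-when-consistent} handles the all-consistent term, and nonnegativity of utilities discards the rest, which together makes the loss proportional to $\Pr_{\tau_1}[B]$. The case analysis over $k$ is otherwise routine, relying only on the default value $T$ assigned to all unspecified profiles of $\calG$.
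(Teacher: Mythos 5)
Your proof is correct and takes essentially the same route as the paper's: both condition on which parts of $S$ the four opponents play, obtain $u(\tau_3;\tau^4)=T-4\delta_1\delta_3^3\left(T-u(\tau_3;\tau_3^3,\tau_1)\right)$ exactly, and control $\sigma$'s potential loss via Lemma~\ref{LEM:sigma-best-when-consistent} together with a union bound over opponents playing in $B$ (your $(1-\beta)^4\geq 1-4\beta$) and the bound $T\leq 2$. One imprecision: the justification ``all utilities of $\calG$ are nonnegative'' is false as stated, since an $S_3$-player can receive $T-\frac{M}{K_\alpha}a_{i,\alpha}<0$; what your argument actually needs---and what does hold---is that the focal player in the residual term plays actions of $S_1$ and hence only ever receives payoffs in $\{0,1,2,T\}$, so the step is sound.
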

\begin{proof}
  We first consider playing $\sigma$ against $\tau^4$. In the event that
the four players play an action in $S_1 \setminus B$, the strategy
$\sigma$ receive payoff at least~$T$ by
Lemma~\ref{LEM:sigma-best-when-consistent}. When at least one player
plays an action outside $S_1$, the strategy $\sigma$ receives $T$. It
follows that for $\sigma$ to receive payoff less than $T$ at least one
of the four players need to play an action of $B$ which happens with
probability at most $4\delta_1\Pr_{\tau_1}[B]$. We thus have
\[
  u(\sigma; \tau^4) \geq (1-4\delta_1\Pr_{\tau_1}[B])T = T - 4\delta_1\Pr_{\tau_1}[B]T\enspace .
\]
We next consider playing $\tau_3$ against $\tau^4$. The strategy
$\tau_3$ receives payoff~$T$ unless exactly one player plays an action
of $S_1$ and the remaining three players play an action of $S_3$. We
thus have
\[
  u(\tau_3;\tau^4) = 4\delta_1\delta_3^3 u(\tau_3; \tau_3^3, \tau_1) +
  (1-4\delta_1\delta_3^3)T = T - 4\delta_1\delta_3^3(T - u(\tau_3; \tau_3^3, \tau_1)) \enspace .
\]
By combining these, and also using $T\leq 2$, the stated inequality
follows.
\end{proof}

\subsection{Comparison of $u(\sigma;\tau^4)$ to $u(\tau^5)$.}
We now combine the analysis of the previous subsections and make use
of the assumption about $F(y,z)>0$ for all $z \in \Simplex^{m-1}$. We
make use of the latter assumption in connection with the relationship
between $\calG$ and $F$ given by Lemma~\ref{LEM:Relationship-G-F}.
\begin{lemma}
  \label{LEM:Utility-in-neighborhood}
  There exists $\eps_y>0$ such that when $\tau_1 \in \Delta(S_1)$ is
  such that $\norm{\sigma-\tau_1}_\infty \leq \eps_y$ we have
  $u(\tau_3; \tau_3^3,\tau_1)<T$ for all $\tau_3 \in \Delta(S_3)$.
\end{lemma}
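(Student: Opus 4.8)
The plan is to combine the explicit payoff identity of Lemma~\ref{LEM:Relationship-G-F} with a compactness argument to produce a uniform gap at $\tau_1=\sigma$, and then to transfer this gap to all nearby $\tau_1$ using the fact that the payoff is a linear functional in its last argument.

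First I would dispose of the base case $\tau_1=\sigma$. By Lemma~\ref{LEM:Relationship-G-F} we have $u(\tau_3;\tau_3^3,\sigma)=T-F(y,\tau_3)$ for every $\tau_3\in\Delta(S_3)=\Simplex^{m-1}$. The polynomial $F(y,\cdot)$ is continuous and $\Simplex^{m-1}$ is compact, so the standing assumption that $F(y,z)>0$ for all $z\in\Simplex^{m-1}$ upgrades to a strict \emph{uniform} lower bound: there is $\eta>0$ (for instance $\eta=\min_{z\in\Simplex^{m-1}}F(y,z)$) with $F(y,z)\geq\eta$ throughout the simplex. Consequently $u(\tau_3;\tau_3^3,\sigma)\leq T-\eta$ for all $\tau_3\in\Delta(S_3)$.

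Next I would perturb the last argument from $\sigma$ to $\tau_1$. The payoff $u(\tau_3;\tau_3^3,\cdot)$ is a linear functional on $\RR^S$, so writing $u(\tau_3;\tau_3^3,\tau_1)-u(\tau_3;\tau_3^3,\sigma)=\sum_{s\in S_1}(\tau_1-\sigma)_s\,u(\tau_3;\tau_3^3,s)$ and bounding each $\abs{u(\tau_3;\tau_3^3,s)}$ by the largest absolute utility $U$ occurring in $\calG$ (a finite constant determined by the input) gives $\abs{u(\tau_3;\tau_3^3,\tau_1)-u(\tau_3;\tau_3^3,\sigma)}\leq U\norm{\tau_1-\sigma}_1\leq 2MU\norm{\tau_1-\sigma}_\infty$, since $\sigma$ and $\tau_1$ are both supported on $S_1$ and $\abs{S_1}=2M$. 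Crucially this bound does not depend on $\tau_3$. Choosing $\eps_y$ so that $2MU\eps_y<\eta$, then for $\norm{\sigma-\tau_1}_\infty\leq\eps_y$ we obtain $u(\tau_3;\tau_3^3,\tau_1)<u(\tau_3;\tau_3^3,\sigma)+\eta\leq T$ for every $\tau_3\in\Delta(S_3)$, as required.

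The step I expect to require the most care is securing a perturbation bound that is \emph{uniform} over the infinitely many $\tau_3\in\Delta(S_3)$; this is precisely what linearity in the last argument supplies, since the constant $2MU$ is independent of $\tau_3$. The compactness step is the other essential ingredient: it is what converts the pointwise positivity of $F(y,\cdot)$ into the strict margin $\eta$ that makes the final inequality strict rather than merely an inequality up to the boundary value $T$.
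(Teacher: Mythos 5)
Your proof is correct. It shares the paper's first step: combining Lemma~\ref{LEM:Relationship-G-F} with continuity of $F(y,\cdot)$ and compactness of $\Simplex^{m-1}$ to extract a uniform margin $\eta=\min_{z\in\Simplex^{m-1}}F(y,z)>0$, so that $u(\tau_3;\tau_3^3,\sigma)\leq T-\eta$ for every $\tau_3$. Where you diverge is in how the margin is transferred to nearby $\tau_1$. The paper defines the value function $G(\tau_1)=\max_{\tau_3\in\Delta(S_3)}u(\tau_3;\tau_3^3,\tau_1)$, observes $G(\sigma)<T$, and invokes continuity of $G$ (a soft, maximum-theorem-style fact justified by a second appeal to compactness) to obtain a neighborhood of $\sigma$ on which $G<T$. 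You instead use that $u(\tau_3;\tau_3^3,\cdot)$ is linear in its last argument to derive the explicit bound $\abs{u(\tau_3;\tau_3^3,\tau_1)-u(\tau_3;\tau_3^3,\sigma)}\leq 2MU\norm{\tau_1-\sigma}_\infty$, with $U$ the largest absolute utility and $2M=\abs{S_1}$, a bound that is uniform in $\tau_3$; taking $\eps_y<\eta/(2MU)$ then finishes. Your route is more elementary and quantitative: it sidesteps having to justify continuity of a max-function and produces an explicit $\eps_y$ in terms of the input data (the paper's $\eps_y$ is purely existential), which could in principle be propagated into the explicit choice of $\eps$ later in Section~\ref{SEC:LSS}. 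It also cleanly covers the non-strict hypothesis $\norm{\sigma-\tau_1}_\infty\leq\eps_y$ of the lemma statement, whereas the paper's continuity argument as written concludes only for $\norm{\sigma-\tau_1}_\infty<\eps_y$ (a cosmetic mismatch, trivially repaired). In substance, your Lipschitz estimate is precisely the uniformity fact that makes the paper's assertion that $G$ is continuous true, so the two arguments are close cousins, with yours filling in that detail explicitly.
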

\begin{proof}
  By assumption we have $F(y,z)>0$ for all $z \in
  \Simplex^{m-1}$. Since $F$ is continuous and $\Simplex^{m-1}$
  compact this implies that
  \begin{equation}
    \label{EQ:min-F-positive}
    \min_{z \in \Simplex^{m-1}} F(y,z) > 0 \enspace .
  \end{equation}
Next, define the function $G : \Delta(S_1) \rightarrow \RR$ by
\[
  G(\tau_1)=\max_{\tau_3 \in \Delta(S_3)} u(\tau_3; \tau_3^3, \tau_1)
\]
Lemma~\ref{LEM:Relationship-G-F} and
Equation~(\ref{EQ:min-F-positive}) together implies that
$G(\sigma) < T$. Since the function $u$ is continuous and again using
compactness of $\Simplex^{m-1}$ it follows that $G$ is continuous as
well. This means there exists $\eps_y>0$ such that $G(\tau_1)<T$
whenever $\norm{\sigma-\tau_1}_\infty < \eps_y$.
\end{proof}

We now define
$\eps=\frac{1}{4m}\min\left(\eps_y,\frac{1}{11M}\right)$. This means
by Equation~(\ref{EQ:sigma-tau1-distance}) that in particular
$\norm{\sigma-\tau_1}_\infty \leq \frac{1}{6M}$, satisfying the
condition of Corollary~\ref{COR:analysis-tau1}, and
$\norm{\sigma-\tau_1}_\infty \leq \eps_y$, satisfying the condition of
Lemma~\ref{LEM:Utility-in-neighborhood}

Combining the inequalities of Corollary~\ref{COR:analysis-tau1},
Proposition~\ref{PROP:analysis-tau2}, and
Proposition~~\ref{PROP:analysis-tau3}, and using that
$\tau=\delta_1\tau_1+\delta_2\tau_2+\delta_3\tau_3$ results in the
following inequality.
\begin{equation}
  \begin{split}
    u(\sigma;\tau^4) \geq u(\tau^5) &+ \left(\frac{\delta_1^5}{M}-12\delta_1\delta_2-8\delta_1\delta_3\right)\Pr_{\tau_1}[B] \\ &+ \delta_2^2 + 4\delta_1\delta_3^4\left(T - u(\tau_3; \tau_3^3, \tau_1)\right)
  \end{split}
\end{equation}

The definition of $\eps$ also gives the inequality
$\eps \leq \frac{1}{44mM}$. Note that
$M(12\delta_2 + 8\delta_3) \leq M(24\eps + 16m\eps) \leq 32mM\eps$. On
the other hand
$\delta_1^4 \geq (1-\delta_2-\delta_3)^4 \geq (1-3m\eps)^4 > 1-12m\eps
\geq 1-12mM\eps$. This gives that
$(\frac{\delta_1^5}{M}-12\delta_1\delta_2-8\delta_1\delta_3) >
\frac{\delta_1}{M}(1-44mM\eps) \geq 0$.

From this and Lemma~\ref{LEM:Utility-in-neighborhood} it follows that
$u(\sigma;\tau^4) \geq u(\tau^5)$. Furthermore, an equality,
$u(\sigma;\tau^4) = u(\tau^5)$ implies that $\delta_2=0$ and
$\delta_3=0$. This means that $\tau=\tau_1$, and
Corollary~\ref{COR:analysis-tau1} then implies that in fact
$\sigma=\tau$. This concludes the proof of
Theorem~\ref{THM:ExistsLSSHardness}.

\section{Conclusion}
We have shown the problems $\ExistsESS$ and $\ExistsLSS$ to be hard
for $\existsDOP \coETR$ and members of $\cExistsForallR$. The main
open problem is to characterize the precise complexity of $\ExistsESS$
and $\ExistsLSS$, perhaps by improving the upper bounds. Another point
is that our hardness proofs construct 5-player games, whereas the
recent and related $\cETR$-completeness results for decision problems
about NE in multi-player games holds already for~3-player games. This
leads to the question about the complexity of $\ExistsESS$ and
$\IsESS$ as well as $\ExistsLSS$ and $\IsLSS$ in 3-player and 4-player
games. The reason that we end up with 5-player games is that we
construct a degree~4 polynomial in the reduction, rather than (a
system of) degree~2 polynomials as used in the related
$\cETR$-completeness results. In both cases a number of players equal
to the degree is used to simulate evaluation of a monomial and a last
player is used to select the monomial. For our proof we critically use
that the degree~4 polynomial involved in the reduction may be assumed
to be non-negative.

%\emergencystretch=1em
\printbibliography

\end{document}